\newtheorem{proposition}{Proposition}
\newtheorem{lemma}{Lemma}
\begin{document}
	\title{Hybrid Transceiver Designs via Majorization-Minimization Algorithm Over MIMO
		Interference Channels}
	\author{Shiqi Gong, Chengwen Xing, Vincent K. N. Lau,~\IEEEmembership{Fellow,~IEEE},
		Sheng Chen,~\IEEEmembership{Fellow,~IEEE}, and Lajos Hanzo,~\IEEEmembership{Fellow, IEEE} %
		\vspace*{-10.2mm}
	}
	\maketitle
	
	\begin{abstract}
		The potential of deploying large-scale antenna arrays in future wireless systems has
		stimulated extensive research on hybrid transceiver designs aiming to approximate the
		optimal fully-digital schemes with much reduced hardware cost and signal processing
		complexity. Generally, this hybrid transceiver structure requires a joint design of
		analog and digital processing to enable both beamsteering and spatial multiplexing
		gains. In this paper, we develop various weighted mean-square-error minimization
		(WMMSE) based hybrid transceiver designs over multiple-input multiple-output (MIMO)
		interference channels at both millimeter wave (mmWave) and microwave frequencies.
		Firstly, a heuristic joint design of hybrid precoder and combiner using alternating
		optimization is proposed, in which the majorization-minimization (MM) method is
		utilized to design the analog precoder and combiner with unit-modulus constraints.
		It is validated that this scheme achieves the comparable performance to the WMMSE
		fully-digital solution. To further reduce the complexity, a phase projection based
		two-stage scheme is proposed to decouple the designs of analog and digital   { precoder-combiner}. Secondly, inspired by the fully-digital solutions based on the
		block-diagonalization zero-forcing (BD-ZF) and signal-to-leakage-plus-noise ratio
		(SLNR) criteria, low-complexity MM-based BD-ZF and SLNR hybrid designs are proposed
		to well approximate the corresponding fully-digital solutions. Thirdly, the
		partially-connected hybrid structure for reducing system hardware cost and power
		consumption is considered, for which the MM-based alternating optimization still
		works. Numerical results demonstrate the similar or superior performance of all the
		above proposed schemes over the existing benchmarks.
	\end{abstract}
	%
	
	\section{Introduction}\label{S1}
	
	The large-scale antenna array offers a promising technology in future wireless systems
	to provide ultra high data rate for bandwidth-hungry applications and the large degree
	of freedoms (DoFs) for eliminating the random effect of wireless fading channels
	\cite{lu2014overview,rusek2012scaling}. However, the
	hardware cost and implementation complexity of deploying a large number of antenna
	elements by the traditional digital signal processing are huge, because each antenna
	requires a dedicated radio frequency (RF) chain
	\cite{roh2014millimeter,hoydis2013massive}. As an alternative cost-effective solution,
	the hybrid transceiver structure with much fewer RF chains than the number of antennas
	has attracted  extensive attention recently, of which the signal processing chain
	consists of the high-dimensional analog RF precoding/combining for providing the
	beamsteering gain, followed by the low-dimensional digital baseband precoding/combing
	mainly for reaping spatial multiplexing gain
	\cite{molisch2017hybrid,el2014spatially}.
	
	For the hybrid transceiver structure, the analog RF processing can be implemented
	using phase shifters \cite{rebeiz2002rf}, switches \cite{mendez2016hybrid} and/or
	lens \cite{zeng2014electromagnetic}, among which the phase shifter based analog
	precoding/combining has been widely investigated
	\cite{liang2014low,sohrabi2016hybrid,sohrabi2017hybrid,singh2015feasibility,Liu_etal2018,ni2016hybrid,Zhou_etal2018,Xing2019hybrid}.
	Phase shifters can be used to steer transmit and receive beams towards the
	desired direction by adjusting the phase of RF signals, and thus typically
	impose constant-modulus constraints on analog precoder and combiner, which makes
	hybrid transceiver designs more complicated and challenging. It has been revealed
	that once the number of RF chains reaches twice that of data streams, implying that
	the number of phase shifters is doubled, the hybrid structure can perfectly realize
	the optimal fully-digital structure  \cite{sohrabi2016hybrid}. However, the
	application with abundant phase shifters is also impractical due to high hardware cost
	and power consumption. To alleviate this issue, the partially-connected hybrid structure
	has been proposed for enabling energy-efficient communications at the expense of
	some performance loss compared to the fully-digital structure
	\cite{sohrabi2017hybrid,singh2015feasibility,Liu_etal2018}.
	
	Hybrid transceivers are applicable not only to mmWave communications but also in
	other lower frequency range \cite{liang2014low,ni2016hybrid}. Moreover, the
	criteria of hybrid designs are diverse, e.g.,  mean squared error (MSE), capacity
	and bit error rate (BER). Various hybrid transceiver designs have been conceived
	for point-to-point MIMO systems 
	\cite{el2014spatially,Xing2019hybrid,yu2016alternating,ni2017near,chen2015iterative}
	and multiuser MIMO systems
	\cite{ni2016hybrid,kim2015mse,nguyen2017hybrid,rajashekar2017iterative,alkhateeb2015limited,wu2018hybrid}.
	The motivation of these designs is to leverage the underlying hybrid structure to
	achieve the comparable performance to the optimal (near-optimal) fully-digital
	solution. To this end, existing hybrid  designs are mainly classified into two
	categories.
	
	One category jointly designs hybrid precoder and combiner to approach the
	fully-digital performance.  For example, by exploiting the sparsity of mmWave
	channels, the orthogonal matching pursuit (OMP) algorithm  was used to jointly
	design hybrid precoder and combiner to approximate the optimal fully-digital
	solution \cite{el2014spatially}. Using matrix-monotonic optimization
	\cite{Xing2019hybrid}, the optimal unconstrained structures of analog precoder
	and combiner under various design criteria can be proved to be unitary matching
	with channel. Some heuristic joint hybrid transceiver designs via alternating
	optimization were also investigated
	\cite{yu2016alternating,ni2017near,chen2015iterative}. Specifically, to
	approximate the optimal fully-digital solution, an alternating minimization
	method was proposed for hybrid designs based on manifold optimization
	\cite{yu2016alternating} and local approximation of phase increment
	\cite{ni2017near}, respectively. In addition, joint hybrid designs were studied
	in multiuser scenarios using the minimum MSE (MMSE), WMMSE and BD-ZF
	fully-digital solutions \cite{kim2015mse,nguyen2017hybrid,rajashekar2017iterative}.
	For example, in
	\cite{nguyen2017hybrid} and \cite{rajashekar2017iterative}, the OMP algorithm
	was utilized to jointly construct the hybrid WMMSE precoder and combiner for
	achieving the performance close to the WMMSE and BD-ZF fully-digital solutions,
	respectively. However, such approaches generally require the fully-digital
	precoder to have a closed-form solution, and its applicability in more general
	scenarios may be limited.

	The other category is the two-stage hybrid transceiver design widely used in
	multiuser MIMO scenarios. In this scheme, the analog precoder and combiner are
	firstly designed by directly optimizing some performance criterion, such as the
	effective array gain. Then the digital precoder and combiner are optimized to
	further improve system performance by eliminating inter-user inference
	\cite{ni2016hybrid,alkhateeb2015limited,wu2018hybrid}. For example, in
	\cite{ni2016hybrid}, the equal gain transmission (EGT) based analog precoder
	and the discrete Fourier transform (DFT) codebook based analog combiner for
	each user were proposed to achieve large array gain. To achieve low channel
	training and feedback overhead, the two-stage hybrid design \cite{alkhateeb2015limited}
	chooses each user's analog precoder and combiner from the quantized codebooks to
	maximize effective channel gain. All the above analog processing schemes can be
	combined with the low-complexity BD-ZF digital processing \cite{spencer2004zero}
	to cancel inter-user interference. Although this BD-ZF scheme is easy to implement,
	it does not consider the influence of noise in the digital precoder design and
	thus performs poorly at low signal-to-noise ratio (SNR) regime. This fact
	motivates us to consider an effective digital processing based on the SLNR
	criterion of \cite{sadek2007leakage}. The SLNR-maximization digital processing is
	more desirable than the BD-ZF criterion in some scenarios with fewer DoFs, i.e.,
	MIMO interference channels \cite{cheng2010new}. This two-stage scheme can also be
	extended to the mixed timescale hybrid precoder optimization
	\cite{liu2014phase,liu2015two} in which the analog and digital precoders are
	adaptive to channel statistics  and real-time channel state information (CSI),
	respectively. 
	
	In this paper, we consider challenging  MIMO interference channels with very
	few DoFs and develop various hybrid transceiver designs based on the MM method.
	Since the MM method guarantees stationary convergence and has the desired
	closed-form solution of each subproblem, it offers an effective tool to address
	the nonconvex constant-modulus constraints on analog precoder and combiner
	\cite{sun2017majorization,wu2018transmit}. Specifically, we propose the
	MM-based alternating optimization, decoupled two-stage scheme and various
	low-complexity schemes for hybrid transceiver designs in both mmWave and
	lower-frequency Rayleigh MIMO interference channels. Additionally, perfect
	CSI and analog processing with infinite resolution are utilized to provide a
	theoretical performance upper-bound for practical implementation of all the
	proposed schemes.  Our contributions together with the associated
	technical challenges are summarized as follows.
	\begin{enumerate}
		\item \textbf{Joint hybrid transceiver design bypassing the optimal fully-digital
			Solution}. For the $K$-user MIMO interference channel, the joint hybrid WMMSE
		transceiver design bypassing the near-optimal fully-digital WMMSE solution is
		studied. This joint design is very challenging since the coupled variables and
		unit-modulus constraint on the analog precoder and combiner lead to the
		nonconvex and NP-hard optimization. To tackle this challenge, the MM-based
		alternating optimization under a practical property of large-scale MIMO is
		proposed, which guarantees to converge. To further reduce the computational
		complexity, we also study another phase projection (PP) based two-stage scheme
		with the decoupled designs of analog and digital precoder and combiner.
		\item \textbf{Low-complexity separate hybrid transceiver designs}. Since the
		suboptimal closed-form fully-digital precoders for each transmit-receive pair
		can be obtained based on BD-ZF and SLNR maximization (SLNR-Max) criteria,
		the proposed low-complexity hybrid transceiver designs focus on approximating
		the BD-ZF and SLNR-Max fully-digital precoders, which also belong to nonconvex
		optimization. In fact, both these low-complexity designs contain multiple
		separate hybrid transceiver designs for all transmit-receive pairs, each of
		which consists of two separate stages. To address this non-convexity, the
		iterative PP (iterative-PP) based hybrid precoder is firstly designed. Then the
		corresponding  hybrid MMSE combiner is optimized through the MM-based alternating
		optimization.
		\item \textbf{Low-cost joint hybrid transceiver design}. In order to further
		reduce hardware cost and power consumption, we consider the
		partially-connected hybrid structure, in which each RF chain  at
		transmitter/receiver is connected to a single non-overlapped subarray. In this
		context, the MM-based alternating optimization still works and can converge to the
		stationary solutions for the joint hybrid WMMSE transceiver design.
	\end{enumerate}
	\textbf{Notations}: The bold-faced lower-case and upper-case letters stand for
	vectors and matrices, respectively. The transpose, conjugate, Hermitian and
	inverse operators are denoted by $(\cdot )^{\rm T}$, $(\cdot )^*$,
	$(\cdot )^{\rm H}$ and $(\cdot )^{-1}$, respectively, while $\text{Tr}(\bm{A})$
	and $\det (\bm{A})$ denote the trace and determinant of $\bm{A}$, respectively.
	$\bm{I}_n$, $\bm{0}_{n\times m}$ and $\bm{1}_n$ are the $n\!\times\! n$ identity
	matrix, the $n\!\times\! m$ zero matrix and the $n$-dimensional vector with all
	elements being one, respectively. {The block-diagonal  matrix with  diagonal
		elements $\bm{A}_1,\cdots ,\bm{A}_N$ is denoted by $\text{BLKdiag}[\bm{A}_1,\cdots ,\bm{A}_N]$. Particularly, it is reduced to  $\text{diag}[a_1,\cdots ,a_N]$ when scalar diagonal elements  are considered  .}
	$[{\bm{A}}]_{n,m}$ denotes the $(n,m)$th {(the $n$th row and $m$th column)} element  of ${\bm{A}}$, and
	$\bm{A}(q_1:q_2,l_1:l_2)$ denotes the { sub-matrix} consisting of the $q_1$ to $q_2$
	rows and $l_1$ to $l_2$ columns of $\bm{A}$, while $\bm{A}(:,l_1:l_2)$ is the
	{ sub-matrix} consisting of the $l_1$ to $l_2$ columns of $\bm{A}$. The $n$th element
	of $\bm{a}$ is denoted by $[\bm{a}]_n$, and $\bm{a}(n:m)$ is the sub-vector
	consists of the $n$th to $m$th elements of $\bm{a}$. $\bm{A}\succ \bm{0}$
	($\succeq \bm{0}$) means that $\bm{A}$ is positive definite (semi-definite), and
	$\lambda_{\max}(\bm{A})$ is the maximum eigenvalue of $\bm{A}$, while
	$e^{\textsf{j}\arg (\cdot )}$ denotes the phase extraction operation in an
	element-wise manner. The rank of $\bm{A}$ is denoted by $\text{rank}(\bm{A })$.
	The modulus operator  denoted by $|\cdot |$, $\|\cdot \|$ is the Euclidean
	distance, and $\| \cdot \|_F$ is the matrix Frobenius norm, while
	$\mathbb{E}[\cdot ]$ is the expectation operator and $\text{vec}(\cdot )$ is the
	vectorization of a matrix. $\Re\{\cdot\}$ is the real part operator and $\otimes$
	is the Kronecker product operator, while $[a]^+\!=\!\max\{a,0\}$. The words
	`independent and identically distributed' and `with respect to' are abbreviated as
	`i.i.d.' and `w.r.t.', respectively.
	\vspace{-3mm}
	\section{System Model}\label{S2}
	
	\subsection{$K$-user MIMO interference channel}\label{S2.1}
	
	As shown in Fig.~\ref{figure0}, we consider a $K$-user MIMO interference channel, where
	all $K$ transmitters and receivers are equipped with hybrid MIMO processor for dealing
	with multiple data streams. Specifically, the $k$th transmitter equipped with $N_{t_k}$
	antennas and $N_{t_k}^{RF}$ RF chains sends $N_{s_k}$ data streams to the corresponding
	receiver equipped with $N_{r_k}$ antennas and $N_{r_k}^{RF}$ RF chains, where
	$N_{s_k}\! \le\! N_{t_k}^{RF}\! \le\! N_{t_k}$ and $N_{s_k}\! \le\! N_{r_k}^{RF}\! \le
	\! N_{r_k}$, $\forall k$. The hybrid MIMO processor at the $k$th transmitter enables 
	the digital baseband precoder $\bm{F}_{D_k}\! \in\! \mathbb{C}^{N_{t_k}^{RF}\times N_{s_k}}$,
	followed by the analog precoder $\bm{F}_{A_k}\! \in\! \mathbb{C}^{N_{t_k}\times N_{t_k}^{RF}}$.
	Similarly, the hybrid MIMO processor at the $k$th receiver consists of an analog RF
	combiner $\bm{G}_{A_k}\! \in\! \mathbb{C}^{ N_{r_k}\times N_{r_k}^{RF}}$, followed by a
	digital combiner $\bm{G}_{D_k}\! \in\! \mathbb{C}^{ N_{r_k}^{RF}\times N_{s_k}}$. Both
	$\bm{F}_{A_k}$ and $\bm{G}_{A_k}$ are realized using analog phase shifters with constant
	modulus, i.e., $\vert[\bm{F}_{A_k}]_{n,m}\vert\! =\! 1$ and $\vert[\bm{G}_{A_k}]_{n,m}
	\vert \! =\! 1$, $\forall n,m$. The transmitted signal by the $k$th transmitter is given
	by $\bm{x}_k\! =\! \bm{F}_{A_k}\bm{F}_{D_k}\bm{s_k}$, where $\bm{s}_{k}\! \in\!
	\mathbb{C}^{N_{s_k}}$ denotes the Gaussian encoded information symbols satisfying
	$\mathbb{E}[\bm{s}_k\bm{s}_k^{\rm H}]\! =\! \bm{I}_{N_{s_k}}$ and $\Vert\bm{F}_{A_k}
	\bm{F}_{D_k}\Vert_F^2\! \le\! P_k$ with $P_k$ being the maximum transmit power.
	Under the assumption of quasi-static block-fading MIMO channel, the received signal
	at the $k$th receiver is written as
	\begin{align}\label{eq1}
	\bm{y}_k =& \bm{H}_{k,k}\bm{F}_{A_k}\bm{F}_{D_k}\bm{s}_k + \sum\nolimits_{i\neq k}
	\bm{H}_{k,i}\bm{F}_{A_i}\bm{F}_{D_i}\bm{s}_i + \bm{n}_k ,
	\end{align}
	where $\bm{H}_{k,i}\! \in\! \mathbb{C}^{N_{r_k}\times N_{t_i}}$ denotes the wireless
	channel between the $i$th transmitter and $k$th receiver, and $\bm{n}_k\! \sim\!
	\mathcal{CN}(\bm{0},\sigma_{n_k}^2\bm{I}_{N_{r_k}})$ is the additive Gaussian noise
	at the $k$th receiver, which has zero mean vector and covariance matrix 
	$\sigma_{n_k}^2\bm{I}_{N_{r_k}}$. Then the hybrid analog-digital combiner at the
	$k$th receiver, i.e., $\bm{G}_k^{\rm H}\!=\! \bm{G}_{A_k}^{\rm H}\bm{G}_{D_k}^{\rm H}$, 
	is applied to $\bm{y}_k$ to obtain the desired output as
	\begin{align}\label{eq2}
	\widehat{\bm{s}}_k &= \bm{G}_{D_k}^{\rm H}\bm{G}_{A_k}^{\rm H}\bm{H}_{k,k}\bm{F}_{A_k}
	\bm{F}_{D_k}\bm{s}_k \nonumber\\
	&+ \bm{G}_{D_k}^{\rm H}\bm{G}_{A_k}^{\rm H}\sum\nolimits_{i\neq k}
	\bm{H}_{k,i}\bm{F}_{A_i}\bm{F}_{D_i}\bm{s}_i + \bm{G}_{D_k}^{\rm H}\bm{G}_{A_k}^{\rm H}
	\bm{n}_k .
	\end{align}
	\begin{figure*}[tp!]	
		\vspace{-10mm}
		\begin{center}
			\includegraphics[width=1.8\columnwidth]{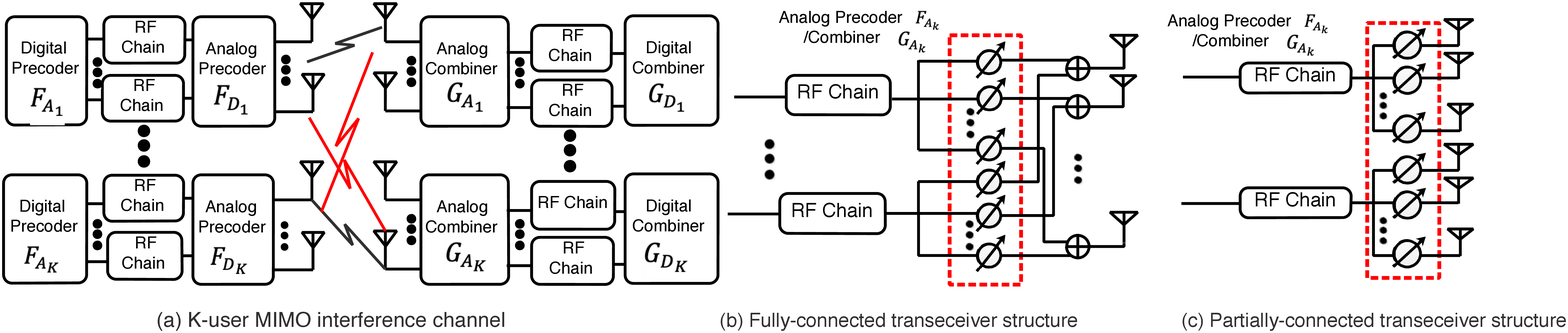}
		\end{center}
		\vspace{-5mm}
		\caption{A $K$-user MIMO interference channel with different  hybrid transceiver structures.}
		\label{figure0} 
		\hrulefill
		\vspace{-5mm}
	\end{figure*}
	
	The achievable sum rate of this $K$-user MIMO system under Gaussian signaling is
	given by
	\begin{align}\label{eq3}
	R_{\rm{sum}} &= \sum\nolimits_{k=1}^{K}\log\det(\bm{I}_{N_{s_k}} + \bm{G}_{D_k}^{\rm H}
	\bm{G}_{A_k}^{\rm H}\bm{H}_{k,k}\bm{F}_{A_k}\bm{F}_{D_k}\nonumber\\
	&~~~~~~~~~~~~~~~~ \cdot(\bm{G}_{D_k}^{\rm H}
	\bm{G}_{A_k}^{\rm H}\bm{H}_{k,k}\bm{F}_{A_k}\bm{F}_{D_k})^{\rm H}\bm{R}_k^{-1}) ,
	\end{align}
	where $\bm{R}_k\! =\! \bm{G}_{D_k}^{\rm H}\bm{G}_{A_k}^{\rm H}\big(\sum\nolimits_{i\neq k}
	\bm{H}_{k,i}\bm{F}_{A_i}\bm{F}_{D_i}\bm{F}_{D_i}^{\rm H}\bm{F}_{A_i}^{\rm H}
	\bm{H}_{k,i}^{\rm H}\! +\! \sigma_{n_k}^2\bm{I}_{N_{r_k}}\big)\bm{G}_{A_k}\bm{G}_{D_k}$
	is the covariance matrix of the inter-user interference plus noise at the $k$th receiver,
	$\forall k$. We aim to jointly design the hybrid precoders and combiners $\mathcal{A}\!
	=\! \{\bm{F}_{A_k},\bm{F}_{D_k},\bm{G}_{D_k},\bm{G}_{A_k},\forall k\}\! =\! \{
	\mathcal{A}_k,\forall k\}$ to maximize the achievable sum rate $R_{\rm{sum}}$
	\eqref{eq3}, which is formulated as
	\begin{align}\label{eq4}
	\begin{array}{cl}
	\max\limits_{\mathcal{A}_k,\forall k} & R_{\rm{sum}} , \\
	\text{s.t.} &\text{Tr}(\bm{F}_{A_k}\bm{F}_{D_k}\bm{F}_{D_k}^{\rm H}\bm{F}_{A_k}^{\rm H})\le P_k, \\
	&\vert[\bm{F}_{A_k}]_{n,m}\vert^2=1,\vert[\bm{G}_{A_k}]_{n,m}\vert^2=1,\forall k,n,m .
	\end{array}
	\end{align}
	
	Clearly, the sum rate maximization \eqref{eq4} is nonconvex and NP-hard w.r.t.
	$\mathcal{A}$ due to the coupled optimization variables and unit-modulus constraints. {
		Even the optimal fully-digital solution  $\bm{F}_{k}\!\!=\!\!\bm{F}_{A_k}\bm{F}_{D_k}$ to the problem \eqref{eq4} without the unit-modulus constraint}   has not been
	globally addressed yet, and only stationary solution generated from iterative
	process is available \cite{shi2011iteratively}. Therefore, for $K$-user MIMO
	interference channels, the traditional method of minimizing the Euclidean distance
	between the hybrid analog-digital precoder and the optimal fully-digital one
	cannot theoretically guarantee its sum rate performance. In the sequel, using a
	reasonable assumption on the analog precoder in large-scale MIMO systems, we
	find an effective joint design of hybrid precoder and combiner to the problem
	\eqref{eq4} via the MM-based alternating optimization with guaranteed sum rate
	performance.
	
	Although the proposed alternating optimization procedure achieves the semi
	closed-form solution to each subproblem, it imposes heavy coordination among
	all transmit-receive pairs. To further reduce complexity, a two-stage hybrid
	design is firstly proposed with the decoupled optimization of analog and
	digital { precoder-combiner}    for each transmit-receive pair. Then, two 
	hybrid designs based on the BD-ZF and SLNR-Max fully-digital precoding are
	studied, both support the independent hybrid precoder and combiner design.
	All the above schemes require global CSI at transmitter, which imposes huge
	training and feedback overhead. To alleviate this problem, we also consider
	the partially connected hybrid transceiver structure with significantly
	reduced feedback overhead and hardware cost, to  which the proposed alternating 
	optimization is directly applicable and a stationary solution of the problem
	\eqref{eq4} can be achieved.
	\vspace{-2mm}
	\subsection{Channel model}\label{S2.2}
	
	In our work, two kinds of block-fading channels are adopted, mmWave channels
	and Rayleigh channels. The first type considers the propagation environment
	at the mmWave band, which has limited scattering and suffers from several
	blockage and reduced diffraction, while the other considers the propagation
	environment with rich scatterers. Moreover, to make the system capacity
	independent of the scaling of the channel matrix, we use the normalized
	channel matrix.
	
	For Rayleigh channels, the elements of the channel matrix $\bm{H}_{k,i}$ are
	i.i.d. complex Gaussian variables with zero mean and unit variance, i.e.,
	$\text{vec}(\bm{H}_{k,i})\! \sim\! \mathcal{CN}(\bm{0},\bm{I}_{N_{r_k}N_{t_i}})$,
	$\forall k$ and $i\! =\! 1,\cdots, K$. For mmWave channels, the extended
	Salen-Valenzuela geometric model \cite{wallace2002modeling} is adopted:
	\begin{align}\label{eq5}
	\bm{H}_{k,i}& \!\!=  \!\!\sqrt{\frac{N_{r_k} N_{t_i}}{L_{k,i}}} \!\!\sum\nolimits_{l=1}^{L_{k,i}}
	\alpha_k^l \bm{a}_r(\theta_{k}^l)\bm{a}_t^{\rm H}(\psi_{i}^l),~\forall k , i \!\!= \!\!1,\cdots, K ,
	\end{align}
	where $L_{k,i}$ denotes the number of dominated propagation paths in the
	channel $\bm{H}_{k,i} $ and $\alpha_k^l$ is the complex gain of the $l$th path,
	while $\theta_k^l$ and $\psi_i^l$  are the angle of arrival (AOA) and angle
	of departure (AOD) of the $l$th path, respectively. Assume that the uniform
	linear array (ULA) is deployed at each transmit-receive pair. The transmit
	and receive array steering vectors can then be expressed as $\bm{a}_t(\psi_i^l)
	\! =\!\frac{1}{\sqrt{N_{t_i}}}\big[1 ~ e^{-\textsf{j}\frac{2\pi}{\lambda}\sin\psi_i^l}
	\cdots e^{-\textsf{j}(N_{t_i}-1)\frac{2\pi}{\lambda}\sin\psi_i^l}\big]^{\rm T}$
	and $\bm{a}_r(\theta_k^l)\! =\! \frac{1}{\sqrt{N_{r_k}}}\big[1 ~
	e^{-\textsf{j}\frac{2\pi}{\lambda}\sin\theta_k^l} \cdots 
	e^{-\textsf{j}(N_{r_k}-1)\frac{2\pi}{\lambda}\sin\theta_k^l}\big]^{\rm T}$,
	respectively, where $\lambda$ denotes the signal wavelength and the antenna
	element spacing is $\frac{\lambda}{2}$.
	\vspace{-1mm}
	\section{MM-based Joint Hybrid Transceiver Design}\label{S3}
	
	\subsection{Equivalent problem reformulation}\label{S3.1}
	To tackle the sum rate maximization \eqref{eq4} effectively, we introduce
	$\widetilde{\bm{F}}_{D_k}\! =\! (\bm{F}_{A_k}^{\rm H}
	\bm{F}_{A_k})^{\frac{1}{2}}\bm{F}_{D_k}$ and $\widetilde{\bm{F}}_{A_k}\! =\!
	\bm{F}_{A_k}(\bm{F}_{A_k}^{\rm H}\bm{F}_{A_k})^{-\frac{1}{2}}$, $\forall k$,
	and reformulate it as an equivalent WMMSE problem \cite{shi2011iteratively}:
	\begin{align}\label{eq6}
	\begin{array}{cl}
	\min\limits_{{\mathcal{A}_k,\bm{W}_k\succ\bm{0}}} \!\!&\!\!\!\! \sum\nolimits_{k=\!1}^K\!\!
	\big(\text{Tr}\big(\bm{W}_k \bm{E}_k({\mathcal{A}}_k)\!\big) \!\!-\!\!\log \det(\bm{W}_k) \!\!-\!\! N_{s_k}\big) , \\
	\text{s.t.}&\!\!\!\!\!\!\!\!\!\text{Tr}\big(\widetilde{\bm{F}}_{D_k}^{\rm H}\widetilde{\bm{F}}_{D_k}\big)
	\le P_k, ~ \widetilde{\bm{F}}_{A_k} \!\!=\!\! \bm{F}_{A_k}\big(\bm{F}_{A_k}^{\rm H}
	\bm{F}_{A_k}\big)^{-\frac{1}{2}}, \\
	& \!\!\!\!\!\!\!\!\!\!\big\vert[{\bm{F}}_{A_k}]_{n,m}\big\vert=1, ~ \big\vert[\bm{G}_{A_k}]_{n,m}\big\vert=1,
	~ \forall k,n,m,
	\end{array}
	\end{align}
	where we still use $\mathcal{A}_k\! =\!\{\bm{G}_{A_k},\bm{G}_{D_k},\widetilde{\bm{F}}_{A_k},
	\widetilde{\bm{F}}_{D_k}\}$ and the MSE matrix $\bm{E}_k(\mathcal{A}_k)$ is defined as
	\begin{align}\label{eq7}
	&\bm{E}_k(\mathcal{A}_k) \!= \! \mathbb{E}\big[\big(\widehat{\bm{s}}_k\! - \!\bm{s}_k\big)
	\big(\widehat{\bm{s}}_k\! \!-\!\! \bm{s}_k\big)^{\!\rm H}\big]\! \nonumber\\
	&\!\!\!= \! \!
	\big(\! \bm{G}_{D_k}\bm{G}_{A_k}\bm{H}_{k,k}\widetilde{\bm{F}}_{A_k}\widetilde{\bm{F}}_{D_k}
	\!\! -\! \!\bm{I}_{N_{s_k}}\! \big)\!\big(\! \bm{G}_{D_k}\bm{G}_{A_k}\bm{H}_{k,k}
	\widetilde{\bm{F}}_{A_k}\widetilde{\bm{F}}_{D_k}\!\!\! -\! \!\bm{I}_{N_{s_k}} \!\big)^{\!\!\rm H} \nonumber \\
	&\!\!\!+ \!\sum\limits_{i\neq k} \big(\bm{G}_{D_k}\bm{G}_{A_k}\bm{H}_{k,i}\widetilde{\bm{F}}_{A_i}
	\widetilde{\bm{F}}_{D_i}\big)\big(\bm{G}_{D_k}\bm{G}_{A_k}\bm{H}_{k,i}\widetilde{\bm{F}}_{A_i}
	\widetilde{\bm{F}}_{D_i}\big)^{\rm H} \nonumber \\
	&\!\!\!+ \sigma_{n_k}^2\bm{G}_{D_k}\bm{G}_{A_k} 
	\bm{G}_{A_k}^{\rm H}\bm{G}_{D_k}^{\rm H} .
	\end{align}
	
	For massive MIMO, the analog precoder
	design for approximating the near-optimal system performance typically satisfies
	$\bm{F}_{A_k}^{\rm H}\bm{F}_{A_k}\! \approx\! N_{t_k}\bm{I}_{N_{t_k}^{RF}}$,
	$\forall k$, with high probability when $N_{t_k}\! \rightarrow\! \infty$
	\cite{sohrabi2016hybrid,el2014spatially,wu2018hybrid}. Therefore, we exploit this
	property and assume that
	$\widetilde{\bm{F}}_{A_k}\! \approx\! \frac{1}{\sqrt{N_{t_k}}}\bm{F}_{A_k}$. Then
	the problem \eqref{eq6} is simplified as
	\begin{align}\label{eq8}
	\begin{array}{cl}
	\min\limits_{\widetilde{\mathcal{A}}_k,\bm{W}_k\succ\!\bm{0}} \!&\!\!\!\!\!\! \sum\nolimits_{k\!=\!1}^K\!
	\big(\text{Tr}\big({\bm{W}}_k \bm{E}_k(\widetilde{\mathcal{A}}_k)\big) \!\!-\!\!
	\log \det(\bm{W}_k) \!\!-\!\! N_{s_k}\big) , \\
	\text{s.t.} & \!\!\!\!\!\!\!\!\!\!\text{Tr}\big(\widetilde{\bm{F}}_{D_k}^{\rm H}\widetilde{\bm{F}}_{D_k}\big)\le P_k,~~ \big\vert[{\bm{F}}_{A_k}]_{n,m}\big\vert\!=\!1, \\
	& \!\!\!\!\!\!\!\!\!\!
	\big\vert[\bm{G}_{A_k}]_{n,m}\big\vert\!=\!1,\forall k,n,m,
	\end{array}
	\end{align}
	where $\widetilde{\mathcal{A}}_k\! =\! \{\bm{G}_{A_k},\bm{G}_{D_k},\bm{F}_{A_k},
	\widetilde{\bm{F}}_{D_k}\}$ and $\bm{E}_k\big(\widetilde{\mathcal{A}}_k\big)$ is
	obtained by using $\widetilde{\bm{F}}_{A_k}\! \approx\!
	\frac{1}{\sqrt{N_{t_k}}}\bm{F}_{A_k}$ in (\ref{eq7}). 
	
	We can jointly optimize the hybrid precoder and combiner by solving \eqref{eq8}.
	Given the other variables, the problem \eqref{eq8} is convex w.r.t.
	$\bm{W}_k$, which can be derived in closed-form
	\begin{align}\label{eqw} 
	\bm{W}_k =& \bm{E}_k^{-1}\big(\widetilde{\mathcal{A}}_k\big), ~ \forall k .
	\end{align}
	In addition, it is well-known that the optimal digital combiner $\bm{G}_{D_k}$
	for simultaneously minimizing all the MSEs of the data streams of the $k$th
	transmit-receive pair is the Wiener filter:
	\begin{align}\label{eq10}
	\bm{G}_{D_k} =& \frac{1}{\sqrt{N_{t_k}}} \widetilde{\bm{Q}}_k \bm{G}_{A_k}^{\rm H}
	\bm{H}_{k,k} \bm{F}_{A_k} \widetilde{\bm{F}}_{D_k} , ~ \forall k ,
	\end{align}
	where $\widetilde{\bm{Q}}_k\!\! =\!\!\Big(\!\sum\limits_{i=1}^K\! \frac{1}{{N_{t_i}}}\big(
	\bm{G}_{A_k}^{\rm H}\bm{H}_{k,i} \bm{F}_{A_i} \widetilde{\bm{F}}_{D_i}\big)\!
	\big(\bm{G}_{A_k}^{\rm H}\bm{H}_{k,i} \bm{F}_{A_i}
	\widetilde{\bm{F}}_{D_i}\big)^{\rm H}\! $ $+\!\sigma_{n_k}^2\bm{G}_{A_k}^{\rm H}
	\bm{G}_{A_k}\! \Big)^{-1}$. Based on the closed-form solutions \eqref{eqw} and
	\eqref{eq10}, our next task is to find the optimal solution
	$\{\bm{F}_{A_k},\bm{F}_{D_k},\bm{G}_{A_k},\forall k\}$ to the problem
	\eqref{eq8}, but \eqref{eq8} is not jointly convex
	w.r.t. $\{\bm{F}_{A_k},\bm{F}_{D_k},\bm{G}_{A_k},\forall k\}$. In Section~\ref{MMALT}, a MM-based
	alternating optimization procedure is proposed to find the semi closed-form
	solution $\{\bm{F}_{A_k},\bm{F}_{D_k},\bm{G}_{A_k},\forall k\}$  to the problem
	\eqref{eq8} with guaranteed stationary convergence.
	\vspace{-3mm}
	\section{Proposed MM-Based Alternating Optimization}\label{MMALT}
	\subsection{Brief review of MM method}\label{S4.1}
	The MM method is an effective optimization tool for solving nonconvex problems.
	The basic idea is to transform the original nonconvex problem into a sequence
	of majorized subproblems that can be solved with semi closed-from solutions and
	guaranteed convergence. The MM method generally consists of two stages, the
	majorization stage and the minimization stage. In the majorization stage, for a
	general optimization problem\vspace{-2mm}
	\begin{align}\label{eq11}
	\min\limits_{\bm{X}}~ f(\bm{X}) , ~ \text{s.t.} ~ \bm{X}\in \mathcal{X} ,
	\end{align}{ where $\mathcal{X}$ is a closed nonempty  set. In terms of our work, it can be nonconvex.} Our aim is to find a continuous surrogate function $g(\bm{X}|\bm{X}^{(l)})$, also
	defined as a majorizer of $f(\bm{X})$ at $\bm{X}^{(l)}$, for updating $\bm{X}$
	at the $l$th iteration. Mathematically, this is expressed as 
	\begin{align}\label{eq12}
	\bm{X}^{(l+1)} =&  \arg \min\limits_{\bm{X}\in \mathcal{X}}~ g(\bm{X}|\bm{X}^{(l)}) .
	\end{align}  
	
	The majorizer $g(\bm{X}|\bm{X}^{(l)})$ must satisfy the following
	conditions to ensure that the MM method converges to a stationary point of the
	problem \eqref{eq11} \cite{sun2017majorization}:
	{\begin{align}\label{eq13}
		\left\{\!\!\! \begin{array}{l}
		g(\bm{X}|\bm{X}^{(l)})\ge f(\bm{X}) , ~ \forall ~\bm{X} \in \mathcal{X} ,  \\
		g(\bm{X}^{(l)}|\bm{X}^{(l)}) = f(\bm{X}^{(l)}) , ~ \forall ~\bm{X}^{(l)}\in \mathcal{X},  \\
		g^{'}(\bm{X}^{(l)}|\bm{X}^{(l)};\bm{d}) = f^{'}(\bm{X}^{(l)};\bm{d}) , ~ \forall \bm{d} ~ \in \mathbb{T}_{\mathcal{X}}(\bm{X}^{(l)}) ,
		\end{array} \right.
		\end{align}}
	where $\mathbb{T}_{\mathcal{X}}(\bm{X}^{(l)})$ is the Boulingand tangent cone { \cite{Boulingand} } of
	$\mathcal{X}$ at $\bm{X}^{(l)}$. It is known that the limit point obtained by minimizing
	$g(\bm{X}|\bm{X}^{(l)})$ subject to $\bm{X}\! \in\! \mathcal{X}$
	satisfies the stationary condition $f^{'}(\bm{X}^{(\infty )};\bm{d})\! \ge\! 0$,
	$\forall \bm{d}\! \in\! \mathbb{T}_{\mathcal{X}}(\bm{X}^{(\infty)})$. Also, based on
	\eqref{eq13}, the monotonicity of the MM method is manifested by
	\begin{align}\label{eq14}
	& f(\bm{X}^{(l+1)})\!\le\! g(\bm{X}^{(l+1)}|\bm{X}^{(l)})\!\le\! g(\bm{X}^{(l)}|\bm{X}^{(l)}) \!=\!f(\bm{X}^{(l)}),\forall l .
	\end{align}
	The interested readers can refer to \cite{sun2017majorization,wu2018transmit} for more
	details of the general MM method.
	\vspace{-3mm}
	\subsection{Proposed MM-based alternating  optimization}\label{MMALT1}
	Our proposed MM-based alternating optimization for the problem \eqref{eq8} is a
	combination of the block coordinate descent (BCD) and MM methods. To be specific,
	we first partition the remaining variables into three blocks as
	$\{\widetilde{\bm{F}}_{D_k},\forall k\}$, $\{\bm{F}_{A_k},\forall k\}$ and
	$\{\bm{G}_{A_k},\forall k\}$. The MM method is then utilized to update the blocks
	$\{\bm{F}_{A_k},\forall k\}$ and $\{\bm{G}_{A_k},\forall k\}$, respectively, with
	the other blocks fixed. Compared to applying the MM method to the problem \eqref{eq8}
	with a single complete block, this approach provides more flexibility in designing
	surrogate functions for better approximating the objective function of the problem
	\eqref{eq8}, leading to faster convergence rate \cite{sun2017majorization}.
	
	\subsubsection{Semi closed-form digital precoder $\{\widetilde{\bm{F}}_{D_k},\forall k\}$}
	
	Given the fixed $\{\bm{F}_{A_k},\bm{G}_{A_k},\bm{G}_{D_k},\bm{W}_k,\forall k\}$, we can
	rewrite the objective function of the problem \eqref{eq8} by omitting the constant term as
	\begin{align}\label{eq15}
	F_{\rm{obj}}(\widetilde{\mathcal{A}}) &= \sum\nolimits_{k=1}^K \text{Tr}\big(\bm{W}_k
	\bm{E}_k(\widetilde{\mathcal{A}}_k) \big) \nonumber\\
	&= \sum\nolimits_{k=1}^K \sum\nolimits_{i=1}^K
	\text{Tr}\big(\widetilde{\bm{F}}_{D_k}^{\rm H}\bm{L}_{i,k}^{\rm H}\bm{W}_i
	\bm{L}_{i,k}\widetilde{\bm{F}}_{D_k}\big)  \\
	& - \sum\nolimits_{k=1}^K \text{Tr}\big(\bm{W}_k\bm{L}_{k,k}\widetilde{\bm{F}}_{D_k} +
	\bm{W}_k\widetilde{\bm{F}}_{D_k}^{\rm H}\bm{L}_{k,k}^{\rm H}\big) + C_1 ,\nonumber
	\end{align}
	where $\bm{L}_{i,k}\!\! =\!\frac{1}{\sqrt{N_{t_k}}}\bm{G}_{D_i}^{\rm H}\bm{G}_{A_i}^{\rm H}
	\bm{H}_{i,k}{\bm{F}}_{A_k}$, $\forall i,k$, and $C_1\! =\!\sum\nolimits_{k=1}^K
	\text{Tr}(\bm{W}_k\! +\! \sigma_{n_k}^2 \bm{G}_{A_k}\bm{G}_{D_k} \bm{W}_k\bm{G}_{D_k}^{\rm H}
	\bm{G}_{A_k}^{\rm H})$. Taking the derivative of $F_{\rm{obj}}(\widetilde{\mathcal{A}})$
	w.r.t. $\widetilde{\bm{F}}_{D_k}$ leads to the semi closed-form digital precoder
	\begin{align}\label{eq16}
	\widetilde{\bm{F}}_{D_k}\! \!  =\! \! \Big(\! \sum\nolimits_{i=1}^K \!  \bm{L}_{i,k}^{\rm H} \bm{W}_i
	\bm{L}_{i,k} \! +\!  \beta_k\bm{I}_{N_{t_k}^{RF}}\Big)^{-1} \bm{L}_{k,k}^{\rm H} \bm{W}_k , ~\forall k,
	\end{align}
	where $\beta_k$ is the dual variable associated with the $k$th transmit power
	constraint. Define the eigenvalue decomposition (EVD) $\sum\nolimits_{i=1}^K\bm{L}_{i,k}^{\rm H} 
	\bm{W}_i\bm{L}_{i,k}\! =\! \bm{U}_{L_k}\bm{\Lambda}_{L_k}\bm{U}_{L_k}^{\rm H}$. Since
	$\beta_k$ should satisfy the complementarity slackness condition $\beta_k\big(
	\text{Tr}(\widetilde{\bm{F}}_{D_k}\widetilde{\bm{F}}_{D_k}^{\rm H })-P_k\big)\! =\! 0$,
	if $\text{Tr}(\widetilde{\bm{F}}_{D_k}\widetilde{\bm{F}}_{D_k}^H  )\! \le \!P_k$, the
	optimal $\beta_k^{\rm{opt}}\! = \! 0$; otherwise, $\beta_k^{\rm{opt}}$ is derived from
	$\sum\nolimits_{m=1}^{N_{t_k}^{RF}} \frac{[\bm{Q}_k\bm{Q}_k^{\rm H}]_{m,m}}
	{([\bm{\Lambda}_{L_k}]_{m,m}+\beta_k^{\rm{opt}})^2}=P_k$, where $\bm{Q}_k\! =\!
	\bm{U}_{L_k}^{\rm H}\bm{L}_{k,k}^{\rm H}\bm{W}_k$.
	
	\subsubsection{Semi closed-form analog precoder $\{\bm{F}_{A_k},\forall k\}$}
	Given $\{\widetilde{\bm{F}}_{D_k},\bm{G}_{A_k},\bm{G}_{D_k},\bm{W}_{k},
	\forall k\}$, the problem \eqref{eq8} is non-convex on $\bm{F}_{A_k},\forall k$.
	We need to find an effective majorizer of the objective function of \eqref{eq8}
	in terms of $\bm{F}_{A_k}$, $\forall k$, so that a stationary solution
	$\{\bm{F}_{A_k},\forall k\}$ for the problem \eqref{eq8} can be obtained using
	the MM method. According to the identity $\text{Tr}(\bm{A}\bm{B}\bm{C}\bm{D})\! =\!
	\text{vec}(\bm{A}^{\rm T})^{\rm T}(\bm{D}^{\rm T}\otimes \bm{B})\text{vec}(\bm{C})$,
	the objective function of \eqref{eq8} can be re-expressed as
	\vspace{-1mm}
	\begin{align}\label{eq17}
	G_{\rm{obj}}(\widetilde{\mathcal{A}}) =& \sum\nolimits_{k=1}^{K} \bm{f}_{A_k}^{\rm H}
	\widetilde{\bm{A}}_k \bm{f}_{A_k}-2 \Re\big\{\bm{a}_k^{\rm H} \bm{f}_{A_k}\big\} + C_1 ,
	\end{align}
	where $\bm{f}_{A_k}\! =\! \text{vec}(\bm{F}_{A_k})$, $\widetilde{\bm{A}}_k\! =\!
	\frac{1}{N_{t_k}}\sum\nolimits_{i=1}^K\big( (\widetilde{\bm{F}}_{D_k}^{*}
	\widetilde{\bm{F}}_{D_k}^{\rm T})\! \otimes\! (\bm{M}_{i,k}\bm{W}_i
	\bm{M}_{i,k}^{\rm H})\big)$, $\bm{M}_{i,k}\! = \! \bm{H}_{i,k}^{\rm H}
	\bm{G}_{A_i}\bm{G}_{D_i}$, and $\bm{a}_k^{\rm H}\! =\! \frac{1}{\sqrt{N_{t_k}}}
	\text{vec}(\bm{W}_k^{\rm T})^{\rm T}\big(\widetilde{\bm{F}}_{D_k}^{\rm T}\!
	\otimes\! \bm{M}_{k,k}^{\rm H }\big)$, $\forall i, k$. It is clearly observed 
	that there is no coupling among $\bm{f}_{A_k},\forall k$ in
	$G_{\rm{obj}}(\widetilde{\mathcal{A}})$, implying that the designs of
	$\bm{f}_{A_k},\forall k$, are independent of each other.
	\begin{lemma}\label{Le1}\cite{wu2018transmit}
		For any two Hermitian matrices $\bm{Q},\bm{Y}\! \in\! \mathbb{C}^{N\times N}$
		satisfying $\bm{Q}\! \succeq\! \bm{Y}$, a majorizer of the quadratic function
		$\bm{x}^{\rm H}\bm{Y}\bm{x}$ at any point $\bm{x}_0\! \in\! \mathbb{C}^{N}$ is
		$\bm{x}^{\rm H}\bm{Q}\bm{x}\! +\! 2\Re(\bm{x}^{\rm H}(\bm{Y}\! -\! \bm{Q})\bm{x}_0)
		\! +\! \bm{x}_0^{\rm H}(\bm{Q}\! -\! \bm{Y})\bm{x}_0.$
	\end{lemma}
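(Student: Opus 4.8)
The plan is to verify the three defining properties of a majorizer from \eqref{eq13} directly for the candidate surrogate
$$
g(\bm{x}|\bm{x}_0) = \bm{x}^{\rm H}\bm{Q}\bm{x} + 2\Re\big(\bm{x}^{\rm H}(\bm{Y}-\bm{Q})\bm{x}_0\big) + \bm{x}_0^{\rm H}(\bm{Q}-\bm{Y})\bm{x}_0 .
$$
First I would establish the global upper-bound property. Forming the difference $g(\bm{x}|\bm{x}_0) - \bm{x}^{\rm H}\bm{Y}\bm{x}$ and collecting terms, I expect it to factor as the Hermitian quadratic form $(\bm{x}-\bm{x}_0)^{\rm H}(\bm{Q}-\bm{Y})(\bm{x}-\bm{x}_0)$. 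Since $\bm{Q}\succeq\bm{Y}$ means $\bm{Q}-\bm{Y}\succeq\bm{0}$, this difference is nonnegative for every $\bm{x}\in\mathbb{C}^N$, which gives $g(\bm{x}|\bm{x}_0)\ge \bm{x}^{\rm H}\bm{Y}\bm{x}$. Setting $\bm{x}=\bm{x}_0$ in that same factorization immediately yields the tangency condition $g(\bm{x}_0|\bm{x}_0)=\bm{x}_0^{\rm H}\bm{Y}\bm{x}_0$, since the quadratic form vanishes at its own center.

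Next I would check the first-order (directional-derivative) matching condition. Because $\bm{x}^{\rm H}\bm{Y}\bm{x}$ and $g(\bm{x}|\bm{x}_0)$ are both smooth (indeed polynomial) in the real and imaginary parts of $\bm{x}$, and their difference $(\bm{x}-\bm{x}_0)^{\rm H}(\bm{Q}-\bm{Y})(\bm{x}-\bm{x}_0)$ has a vanishing gradient at $\bm{x}=\bm{x}_0$ (a quadratic form attains a critical point at its center), the two functions share the same gradient there. Hence for any direction $\bm{d}$ the directional derivatives coincide, which is exactly the third condition in \eqref{eq13}. This step is essentially a restatement of the fact that $\bm{Q}-\bm{Y}$ times $(\bm{x}-\bm{x}_0)$ is $O(\|\bm{x}-\bm{x}_0\|)$, so its contribution to the first-order Taylor expansion is zero.

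The only mild subtlety — and the step I would be most careful about — is bookkeeping with complex variables: the term $2\Re(\bm{x}^{\rm H}(\bm{Y}-\bm{Q})\bm{x}_0)$ must be handled using Wirtinger-style conventions so that the cross terms in $g(\bm{x}|\bm{x}_0)-\bm{x}^{\rm H}\bm{Y}\bm{x}$ collapse correctly into $-2\Re(\bm{x}^{\rm H}(\bm{Q}-\bm{Y})\bm{x}_0)$ and combine with $\bm{x}^{\rm H}(\bm{Q}-\bm{Y})\bm{x}$ and $\bm{x}_0^{\rm H}(\bm{Q}-\bm{Y})\bm{x}_0$ to give the clean square. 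Using the Hermitian symmetry of $\bm{Q}-\bm{Y}$ (so that $\bm{x}_0^{\rm H}(\bm{Q}-\bm{Y})\bm{x}=\overline{\bm{x}^{\rm H}(\bm{Q}-\bm{Y})\bm{x}_0}$) this goes through without obstruction, and the lemma follows. Since this is the standard quadratic-majorization result (and the paper attributes it to \cite{wu2018transmit}), I would keep the write-up to these few lines rather than expanding the elementary algebra.
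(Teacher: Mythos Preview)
Your proposal is correct: the difference $g(\bm{x}|\bm{x}_0)-\bm{x}^{\rm H}\bm{Y}\bm{x}$ indeed collapses to $(\bm{x}-\bm{x}_0)^{\rm H}(\bm{Q}-\bm{Y})(\bm{x}-\bm{x}_0)\ge 0$, and the tangency and first-order matching conditions follow immediately from that factorization as you describe. Note that the paper does not supply a proof of this lemma at all --- it is simply quoted from \cite{wu2018transmit} --- so your direct verification via the three conditions in \eqref{eq13} is more than what appears here; there is no ``paper's own proof'' to compare against.
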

	\vspace{-1mm}
	According to Lemma~\ref{Le1}, a majorizer $g(\bm{f}_{A_k}|\bm{f}_{A_k}^{(l)})$ of
	{$G_{\rm{obj}}(\widetilde{\mathcal{A}})$} at $\bm{f}_{A_k}^{(l)}$ can be constructed as
	\begin{align}\label{eq18}
	g(\bm{f}_{A_k}|\bm{f}_{A_k}^{(l)}) &=\lambda_{\max}(\widetilde{\bm{A}}_k)
	\bm{f}_{A_k}^{\rm H} \bm{f}_{A_k} + 2\Re\{\bm{f}_{A_k}^{\rm H}\widetilde{\bm{a}}_k\}
	\nonumber\\
	& + (\bm{f}_{A_k}^{(l)})^{\rm H}(\lambda_{\max}(\widetilde{\bm{A}}_k)\bm{I}_{N_{t_k}N_{t_k}^{RF}}
	- \widetilde{\bm{A}}_k) \bm{f}_{A_k}^{(l)} ,
	\end{align}
	where $\widetilde{\bm{a}}_k\! \!=\! \!\big(\!\widetilde{\bm{A}}_k\!\! -\!\!
	\lambda_{\max}(\widetilde{\bm{A}}_k)\bm{I}_{N_{t_k}N_{t_k}^{RF}}\big) \bm{f}_{A_k}^{(l)}
	\! \!-\!\! \bm{a}_k$. Hence, the majorized problem for optimizing $\bm{f}_{A_k}$,
	can be formulated as 
	\begin{align}\label{eq19}
	& \min\limits_{\bm{f}_{A_k}}  ~\Re\big\{\bm{f}_{A_k}^{\rm H}\widetilde{\bm{a}}_k\big\},~\text{s.t.} ~\vert[\bm{f}_{A_k}]_q\vert\!=\!1 , \forall q\!=\!\!1\cdots N_{t_k}N_{t_k}^{RF}, 
	\end{align}
	{ The semi closed-form solution
		for (\ref{eq19}) is given by}
	\begin{align}\label{app} 
	\bm{f}_{A_k} =& - e^{\textsf{j}\arg\big(\widetilde{\bm{a}}_k\big)} , ~ \forall k .
	\end{align}
	
	\subsubsection{Semi closed-form analog combiner $\{\bm{G}_{A_k},\forall k\}$}
	
	Similarly, by fixing $\{\bm{G}_{D_k},\bm{F}_{A_k},\widetilde{\bm{F}}_{D_k},\bm{W}_k,
	\forall k\}$, we re-express the objective function of the problem \eqref{eq8} in
	terms of $\bm{G}_{A_k},\forall k$, as
	\begin{align}\label{eq21}
	S_{\rm{obj}}\big(\widetilde{\mathcal{A}}\big) =& \sum\nolimits_{k=1}^K
	\bm{g}_{A_k}^{\rm H} \widetilde{\bm{N}}_{k} \bm{g}_{A_k} - 2\Re\{\bm{d}_k^{\rm H}\bm{g}_{A_k}\} ,
	\end{align}	 
	where $\bm{g}_{A_k}\! =\! \text{vec}(\bm{G}_{A_k})$, $\bm{d}_k^{\rm H}\!=\!
	\text{vec}(\bm{D}_k^{\rm T}\bm{G}_{D_k}^{\rm T})^{\rm T}$, $\bm{D}_k\! =\!
	\frac{1}{\sqrt{N_{t_k}}}\bm{W}_k\widetilde{\bm{F}}_{D_k}^{\rm H}\bm{F}_{A_k}^{\rm H}
	\bm{H}_{k,k}^{\rm H}$ and $\widetilde{\bm{N}}_k\! =\! (\bm{G}_{D_k}\bm{W}_k
	\bm{G}_{D_k}^{\rm H})^{\rm T}\otimes $ $\big(\! \sum\nolimits_{i=1}^K
	\frac{1}{N_{t_k}} \bm{H}_{k,i}\bm{F}_{A_i}\widetilde{\bm{F}}_{D_i}
	\widetilde{\bm{F}}_{D_i}^{\rm H}\bm{F}_{A_i}^{\rm H}\bm{H}_{k,i}^{\rm H}\! \!+\!
	\sigma_{n_k}^2\bm{I}_{N_{r_k}}\big)$. Obviously, $S_{\rm{obj}}(\widetilde{\mathcal{A}})$
	is separable w.r.t. {$\bm{g}_{A_k}$}. Hence, based on Lemma~\ref{Le1},
	a majorizer $s(\bm{g}_{A_k}|\bm{g}_{A_k}^{(l)})$ 
	of $S_{\rm{obj}}(\widetilde{\mathcal{A}})$
	at $\bm{g}_{A_k}^{(l)}$ is given by 
	\begin{align}\label{eq22}
	s(\bm{g}_{A_k}|\bm{g}_{A_k}^{(l)}) =& \lambda_{\max}(\widetilde{\bm{N}}_k)\bm{g}_{A_k}^{\rm H}
	\bm{g}_{A_k} + 2\Re\big\{\bm{g}_{A_k}^{\rm H}\widetilde{\bm{d}}_k\big\} + C_3 ,
	\end{align}
	where $\widetilde{\bm{d}}_k\! =\! (\widetilde{\bm{N}}_{k}\! -\! \lambda_{\max}(\widetilde{\bm{N}}_{k})
	\bm{I}_{N_{r_k}N_{r_k}^{RF}})\bm{g}_{A_k}^{(l)}\! -\! \bm{d}_k$ and $C_3\! =\!
	\text{Tr}(\bm{W}_k)\! +\! (\bm{g}_{A_k}^{(l)})^{\rm H}(\lambda_{\max}(\bm{N}_{k})
	\bm{I}_{N_{r_k}N_{r_k}^{RF}}\! - \!\widetilde{\bm{N}}_k)\bm{g}_{A_k}^{(l)}$. 
	Hence, the majorized problem for optimizing $\bm{g}_{A_k}$ can be simplified as
	\begin{align}\label{eq23}
	\min\limits_{\bm{g}_{A_k}} ~\Re\big\{\bm{g}_{A_k}^{\rm H}\widetilde{\bm{d}}_k\big\},
	\text{s.t.} ~ \vert[\bm{g}_{A_k}]_{q^{'}}\vert\!\!=\!\!1 ,  ~\forall q^{'}\!\!=\!\!1\cdots N_{r_k}N_{r_k}^{RF} ,
	\end{align}  \vspace{-2mm}
	with the semi closed-form solution
	\begin{align}\label{eq24} 
	\bm{g}_{A_k} =& - e^{\textsf{j}\arg\big(\widetilde{\bm{d}}_k\big)} .
	\end{align}  
	
	Integrating the solutions \eqref{eqw}, \eqref{eq10}, \eqref{eq16}, \eqref{app} and
	\eqref{eq24} leads to the proposed MM-based alternating optimization for the
	hybrid transceiver design, which is listed in Algorithm~\ref{AL1}.
	
	\setlength{\textfloatsep}{0.3cm}
	{\renewcommand\baselinestretch{0.9}\selectfont
		\begin{algorithm}[tp]
			\caption{MM-Alt-Opt: Joint hybrid transceiver design for the problem \eqref{eq8}}
			\label{AL1}
			\begin{small}
				\begin{algorithmic}[1]
					\REQUIRE{Initial hybrid precoders and combiners $\widetilde{\mathcal{A}}^{(0)}$; outer
						iteration index $I_W=0$; convergence threshold $\epsilon_{obj}$.}
					\REPEAT
					\STATE{Calculate $\bm{G}_{D_k}^{(I_{W}+1)},\forall k$, according to \eqref{eq10}.}
					\STATE{Fix $\bm{G}_{D_k}^{(I_{W}+1)},\forall k$, calculate $\bm{W}_k^{(I_{W}+1)},
						\forall k$, according to \eqref{eqw}.}
					\STATE{Fix $\{\bm{W}_{k}^{(I_{W}+1)},\bm{G}_{D_k}^{(I_{W}+1)},\forall k\}$,
						calculate $\widetilde{\bm{F}}_{D_k}^{(I_{W}+1)},\forall k$, according to \eqref{eq16}.}
					\STATE{Update $\widetilde{\mathcal{A}}^{(I_{W})}$ with $\{\widetilde{\bm{F}}_{D_k}^{(I_{W}\!+\!1)},
						\bm{W}_{k}^{(I_{W}\!+\!1)},\bm{G}_{D_k}^{(I_{W}\!+\!1)},\forall k\}$.}
					\STATE{\text{Calculate $\widetilde{\bm{F}}_{A_k}^{(I_{W}+1)},\forall k$, using MM method}
						$\begin{cases} ~~ \textbf{Input }\text{inner iteration index } I_{M}\!=\!0 \text{ and set }
						\widetilde{\mathcal{A}}^{(I_{M})}\!=\!\widetilde{\mathcal{A}}^{(I_{W})}. \\
						~~~~ \textbf{while } \vert  F_{\rm{obj}}(\widetilde{\mathcal{A}}^{(I_{M})}) -
						F_{\rm{obj}}(\widetilde{\mathcal{A}}^{(I_{M}-1)})\vert\le \epsilon_{obj} \textbf{ do} \\
						~~~~~~ \text{Calculate } \bm{F}_{A_k}^{(I_{M}+1)},\forall k, \text{ according to \eqref{app}}. \\
						~~~~~~ \text{Set } I_M=I_M + 1 \text{ and update } \widetilde{\mathcal{A}}^{(I_{M})}
						\text{ with } \bm{F}_{A_k}^{(I_{M})},\forall k. \\
						~~~~ \textbf{end while} \\
						~~ \textbf{Output } \bm{F}_{A_k}^{(I_{W}+1)}=\bm{F}_{A_k}^{(I_{M})},\forall k.
						\end{cases}$}
					\STATE{Update $\widetilde{\mathcal{A}}^{(\!I_{W}\!)}$ with $\!\{\bm{W}_{k}^{(I_{W}\!+\!1)},
						\bm{G}_{D_k}^{(I_{W}\!+\!1)},\widetilde{\bm{F}}_{D_k}^{(I_{W}\!+\!1)},\bm{F}_{A_k}^{(I_{W}\!+\!1)},\forall k\}\!$.}
					\STATE{\text{Calculate $\bm{G}_{A_k}^{(I_W+1)},\forall k$, using MM method}
						$\begin{cases} ~~ \textbf{Input} \text{ inner iteration index } I_M=0 \text{ and set }
						\widetilde{\mathcal{A}}^{(I_M)}\!=\!\widetilde{\mathcal{A}}^{(I_{W})}. \\
						~~~~ \textbf{while } \vert  F_{\rm{obj}}(\widetilde{\mathcal{A}}^{(I_{M})}) -
						F_{\rm{obj}}(\widetilde{\mathcal{A}}^{(I_{M}-1)})\vert\le \epsilon_{obj} \textbf{ do} \\
						~~~~~~ \text{Calculate } \bm{G}_{A_k}^{(I_{M}+1)} \text{ according  to \eqref{eq24}}.\\
						~~~~~~ \text{Set } I_M=I_M +1 \text{ and update } \widetilde{\mathcal{A}}^{(I_M)}
						\text{ with } \bm{G}_{A_k}^{(I_M)},\forall k. \\
						~~~~ \textbf{end while} \\
						~~\textbf{Output } \bm{G}_{A_k}^{(I_{W}+1)}=\bm{G}_{A_k}^{(I_M)},\forall k.
						\end{cases}$}
					\STATE{Set $\widetilde{\mathcal{A}}^{(I_W+1)}\!\!=\!\!\{\bm{W}_{k}^{(I_W+1)},\bm{G}_{D_k}^{(I_W+1)},
						\widetilde{\bm{F}}_{D_k}^{(I_W+1)},\bm{F}_{A_k}^{(I_W+1)},$ $\bm{G}_{A_k}^{(I_W+1)}, \forall k\}$ and $I_W=I_W +1$.}
					\UNTIL {$\vert F_{\rm{obj}}(\widetilde{\mathcal{A}}^{(I_{M})}) -
						F_{\rm{obj}}(\widetilde{\mathcal{A}}^{(I_{M}-1)})\vert\le \epsilon_{obj}$.}
					\ENSURE{ $\{\bm{G}_{D_k}^{(I_W)},\bm{G}_{A_k}^{(I_W)},\bm{F}_{D_k}^{(I_W)},\bm{F}_{A_k}^{(I_W)},
						\forall k\}$ based on $\widetilde{\bm{F}}_{D_k}=\big(\bm{F}_{A_k}^{\rm H}
						\bm{F}_{A_k}\big)^{\frac{1}{2}}\bm{F}_{D_k}$.}
				\end{algorithmic}
			\end{small}
		\end{algorithm}}
		\setlength{\floatsep}{0.05mm}
		\vspace{-3mm}
		\subsection{Two-stage hybrid transceiver design}\label{S4.3}
		
		Next we propose a two-stage hybrid transceiver design with the decoupled
		analog and digital { precoder-combiner} optimization. First, we present a
		useful property of large-scale MIMO. 
		\begin{proposition}\label{p1}
			For large-scale MIMO systems with Rayleigh or mmWave channels, the correlation
			matrices between different channels $\bm{H}_{k,k}$ and $\bm{H}_{i,k},
			\forall i\neq k$, satisfy
			\begin{align} \label{Ho1} 
			&\lim_{N_{t_k}\to +\infty}\frac{1}{N_{t_k}}\bm{H}_{i,k}\bm{H}_{k,k}^{\rm H} \!=\!
			\bm{0}_{N_{r_i}\times N_{r_k}} , \forall i, k\!=\!1,\cdots,K , ~ i \!\neq\! k .
			\end{align}
			Moreover, define the singular value decompositions (SVDs) $\bm{H}_{i,k}\!\! =\!
			\bm{U}_{i,k}\bm{\Lambda}_{i,k}\bm{V}_{i,k}^H,\forall i,k$. We infer from
			\eqref{Ho1} that the first $\text{rank}(\bm{H}_{k,k})$ columns of $\bm{V}_{k,k}$
			and the first
			$\text{rank}(\bm{H}_{i,k})$ columns of $\bm{V}_{i,k},\forall i\neq k$, are
			asymptotically orthogonal, i.e.,
			\begin{align}\label{Ho2} 
			&\lim_{N_{t_k}\to +\infty}{\bm{V}_{i,k}^{\rm H}
				(1:\text{rank}(\bm{H}_{i,k}),:)}
			\bm{V}_{k,k}(:,1:\text{rank}(\bm{H}_{k,k})) \nonumber\\
			&~~~~=
			\bm{0}_{\text{rank}(\bm{H}_{i,k})\times \text{rank}(\bm{H}_{k,k})},~~ \forall i \neq k.
			\end{align}
		\end{proposition}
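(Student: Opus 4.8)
The plan is to establish \eqref{Ho1} first by a direct second-moment computation for each channel model, and then derive \eqref{Ho2} as an easy linear-algebra consequence. For the Rayleigh case, write $\bm{H}_{i,k}$ and $\bm{H}_{k,k}$ as matrices with i.i.d.\ $\mathcal{CN}(0,1)$ entries that are independent of each other since $i\neq k$. The $(p,q)$th entry of $\frac{1}{N_{t_k}}\bm{H}_{i,k}\bm{H}_{k,k}^{\rm H}$ is $\frac{1}{N_{t_k}}\sum_{n=1}^{N_{t_k}}[\bm{H}_{i,k}]_{p,n}[\bm{H}_{k,k}]_{q,n}^*$, a normalized sum of $N_{t_k}$ i.i.d.\ zero-mean terms (each a product of two independent unit-variance complex Gaussians). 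By the strong law of large numbers this converges almost surely to its mean, which is $0$ because $\mathbb{E}\big[[\bm{H}_{i,k}]_{p,n}[\bm{H}_{k,k}]_{q,n}^*\big]=\mathbb{E}[\bm{H}_{i,k}]_{p,n}\cdot\mathbb{E}[\bm{H}_{k,k}]_{q,n}^*=0$; hence the whole matrix tends to $\bm{0}_{N_{r_i}\times N_{r_k}}$ entrywise.

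For the mmWave case, substitute the Salen--Valenzuela expansion \eqref{eq5} for both channels. Then $\frac{1}{N_{t_k}}\bm{H}_{i,k}\bm{H}_{k,k}^{\rm H}$ becomes, up to the path-gain normalization constants, a double sum over paths $l,l'$ of terms proportional to $\bm{a}_r(\theta_i^l)\big(\bm{a}_t^{\rm H}(\psi_k^l)\bm{a}_t(\psi_k^{l'})\big)\bm{a}_r^{\rm H}(\theta_k^{l'})$. The crucial scalar is the inner product of two transmit steering vectors of a ULA, $\bm{a}_t^{\rm H}(\psi_k^l)\bm{a}_t(\psi_k^{l'})=\frac{1}{N_{t_k}}\sum_{n=0}^{N_{t_k}-1}e^{\textsf{j} n\frac{2\pi}{\lambda}(\sin\psi_k^l-\sin\psi_k^{l'})}$, which is a normalized Dirichlet kernel; for distinct AODs ($\sin\psi_k^l\neq\sin\psi_k^{l'}$, which holds almost surely for continuous angle distributions) it is $O(1/N_{t_k})\to 0$ as $N_{t_k}\to\infty$, and for coincident angles it equals $1$. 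Because $\bm{H}_{i,k}$ and $\bm{H}_{k,k}$ involve statistically independent sets of AODs $\{\psi_k^l\}$ (one associated with transmitter $k$ seen by receiver $i$, the other intra-link), the collisions happen with probability zero; combined with the extra $1/N_{t_k}$ already multiplying the product, every surviving term is $o(1)$, so the limit is $\bm{0}$.

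For \eqref{Ho2}, plug the SVDs $\bm{H}_{i,k}=\bm{U}_{i,k}\bm{\Lambda}_{i,k}\bm{V}_{i,k}^{\rm H}$ and $\bm{H}_{k,k}=\bm{U}_{k,k}\bm{\Lambda}_{k,k}\bm{V}_{k,k}^{\rm H}$ into \eqref{Ho1}. This gives $\frac{1}{N_{t_k}}\bm{U}_{i,k}\bm{\Lambda}_{i,k}\bm{V}_{i,k}^{\rm H}\bm{V}_{k,k}\bm{\Lambda}_{k,k}^{\rm H}\bm{U}_{k,k}^{\rm H}\to\bm{0}$. Left-multiplying by $\bm{U}_{i,k}^{\rm H}$ and right-multiplying by $\bm{U}_{k,k}$ (unitary, hence norm-preserving) shows $\frac{1}{N_{t_k}}\bm{\Lambda}_{i,k}\bm{V}_{i,k}^{\rm H}\bm{V}_{k,k}\bm{\Lambda}_{k,k}^{\rm H}\to\bm{0}$. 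Restricting to the top-left blocks corresponding to the nonzero singular values, i.e.\ the first $\text{rank}(\bm{H}_{i,k})$ rows of $\bm{V}_{i,k}^{\rm H}$ and the first $\text{rank}(\bm{H}_{k,k})$ columns of $\bm{V}_{k,k}$, the corresponding diagonal blocks of $\bm{\Lambda}_{i,k}$ and $\bm{\Lambda}_{k,k}$ are invertible; multiplying through by their inverses (and noting the singular values scale like $\sqrt{N_{t_k}}$ under the normalization in \eqref{eq5}, so the $1/N_{t_k}$ factor is exactly absorbed) isolates $\bm{V}_{i,k}^{\rm H}(1:\text{rank}(\bm{H}_{i,k}),:)\,\bm{V}_{k,k}(:,1:\text{rank}(\bm{H}_{k,k}))\to\bm{0}$, which is \eqref{Ho2}.

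The main obstacle is making the scaling bookkeeping in \eqref{Ho2} rigorous: one must confirm that the nonzero singular values of the normalized channels grow at the rate $\Theta(\sqrt{N_{t_k}})$ so that dividing by $N_{t_k}$ and then by the two singular-value blocks leaves an $O(1)$ quantity converging to zero, rather than an indeterminate $0\cdot\infty$ form. For Rayleigh channels this follows from Marchenko--Pastur-type concentration of the empirical spectral distribution of $\frac{1}{N_{t_k}}\bm{H}\bm{H}^{\rm H}$; for mmWave channels it follows from the explicit finite-rank structure in \eqref{eq5} together with the asymptotic orthonormality of the steering vectors. A cleaner route that sidesteps most of this is to argue directly on column spaces: \eqref{Ho1} forces the asymptotic orthogonality of the dominant left/right singular subspaces, and one then only needs that the restricted Gram matrices of the leading right singular vectors are asymptotically well-conditioned, which again reduces to the steering-vector Dirichlet-kernel estimate (mmWave) or standard random-matrix concentration (Rayleigh).
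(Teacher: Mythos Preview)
Your proposal is correct and follows essentially the same route as the paper. For \eqref{Ho1} the paper simply cites an external reference, whereas you supply the explicit law-of-large-numbers argument (Rayleigh) and Dirichlet-kernel steering-vector computation (mmWave); for \eqref{Ho2} both you and the paper substitute the SVDs, strip off the unitary $\bm{U}$-factors, and then divide out the nonzero singular values on the restricted block. One minor bookkeeping slip: in your mmWave argument the ``extra $1/N_{t_k}$'' is in fact exactly cancelled by the two $\sqrt{N_{t_k}}$ prefactors in \eqref{eq5}, so the $o(1)$ comes solely from the Dirichlet kernel---the conclusion is unaffected. You are also right to flag the scaling issue (that the normalized singular values $\frac{1}{\sqrt{N_{t_k}}}[\bm{\Lambda}_{i,k}]_{p,p}$ must remain bounded away from zero in the limit for the division step to be legitimate); the paper's proof glosses over this point entirely and simply asserts that nonzero singular values force $a_{p,q}=0$, so your treatment is in fact more careful than the original.
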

		\vspace{-1mm}
		\begin{proof}
			See Appendix~\ref{APa}.
		\end{proof}
		
		According to Proposition~\ref{p1}, the desired channel of each transmit-receive pair
		and the corresponding interference channels are all asymptotically orthogonal, which
		implies that the inter-user interference can be canceled by the large array effect
		without loss of MIMO transceiver design freedom. Based on the above discussion, in
		the first stage, we independently design the analog precoder and combiner
		$\{\bm{F}_{A_k},\bm{G}_{A_k}\}$ of each transmit-receive pair to maximize the
		effective channel gain, which is beneficial to improve the sum rate. Mathematically,
		it is formulated as\vspace{-1mm}
		\begin{align} \label{Tw1} %
		&\min\limits_{\bm{G}_{A_k}} ~~\Vert \bm{G}_{A_k}^{\rm H} \bm{H}_{k,k}\bm{F}_{A_k}
		\Vert_F^2 , \nonumber\\
		&~ \text{s.t.} ~~~ \big\vert[{\bm{F}}_{A_k}]_{n,m}\big\vert=1 ,
		\big\vert[\bm{G}_{A_k}]_{n,m}\big\vert=1,\forall n,m .
		\end{align}
		Note that the problem \eqref{Tw1} is still nonconvex. However, for large-scale MIMO
		systems, the unconstrained optimal solution of the problem \eqref{Tw1} is easily
		derived as \cite{wu2018hybrid,sadek2007leakage}
		\vspace{-1mm}
		\begin{align}\label{Tw2} 
		\bm{F}_{A_k}^{\rm{Unc}} = \bm{V}_{k,k}(:,1:N_{t_k}^{RF})  , ~
		\bm{G}_{A_k}^{\rm{Unc}} = \bm{U}_{k,k}(:,1:N_{r_k}^{RF}) .
		\end{align}
		
		Our goal is to design the unit-modulus analog precoder and combiner to
		sufficiently approximate the closed-form solution of \eqref{Tw2}. { Hence, the unit-modulus analog precoder}
		$\bm{F}_{A_k}$ is designed so that
		\begin{align}\label{eq29}
		\min\limits_{\bm{F}_{A_k}} \Vert \bm{F}_{A_k} - \bm{F}_{A_k}^{\rm{Unc}}\Vert_F^2 ,
		~ \text{s.t.} ~ \vert[\bm{F}_{A_k}]_{n,m}\vert=1, \forall n,m.
		\end{align}
		Likewise, the unit-modulus analog combiner $\bm{G}_{A_k}$ can also be obtained
		according to
		\begin{align}\label{eq30}
		\min\limits_{\bm{G}_{A_k}} \Vert \bm{G}_{A_k} - \bm{G}_{A_k}^{\rm{Unc}}\Vert_F^2 ,
		~ \text{s.t.} ~ \vert[\bm{G}_{A_k}]_{n,m}\vert=1, \forall n,m.
		\end{align}
		Both these two problems can be globally solved by PP to yield the closed-form
		solutions as
		\begin{align}\label{TWA2} 
		\bm{F}_{A_k}^{\rm{PP}} =  e^{\text{j}\arg \big(\bm{F}_{A_k}^{\rm{Unc}}\big)} , ~
		\bm{G}_{A_k}^{\rm{PP}} =  e^{\text{j}\arg \big(\bm{G}_{A_k}^{\rm{Unc}}\big)} .
		\end{align}
		
		In the second digital stage, to further suppress the inter-user interference at
		all transmit-receive pairs, the WMMSE-based joint optimization of the digital
		precoder and combiner is still required. By fixing the analog precoder and 
		combiner at the solutions obtained in the first analog stage, a low-dimensional
		alternating optimization between the digital precoder $\widetilde{\bm{F}}_{D_k}$
		in \eqref{eq16} and the digital combiner $\bm{G}_{D_k}$ in \eqref{eq10} is
		performed, which clearly has lower computational complexity than the proposed
		MM-based alternating optimization of Section~\ref{MMALT1}.
		
		This two-stage hybrid design can be regarded as a special case of the MM-based
		alternating optimization by predetermining the analog precoder and combiner of
		each transmit-receiver pair as given in \eqref{TWA2}, and thus only the
		iterative procedure between the digital precoder $\bm{F}_{D_k}$ and the
		digital combiner $\bm{G}_{D_k}$ is performed. The performance of {the MM-based alternating} optimization  generally depends on the initial point
		\cite{shi2011iteratively}, and we heuristically choose the analog precoder and
		combiner design of \eqref{TWA2} as a initial point due to its potential in
		harvesting large array gain. The superior sum rate performance of this two-stage
		hybrid design will be illustrated by the numerical simulations of Section~\ref{SIM}.
		\vspace{-3mm}
		\section{Low-Complexity and { Low-Cost} Hybrid Transceiver Designs}\label{S5}
		Although the semi closed-form solutions to hybrid transceiver design can be
		obtained using the above two alternating optimization procedures, they require 
		extensive coordination among all transmit-receive pairs. In this section, we
		investigate the low-complexity hybrid transceiver designs from the perspectives
		of decoupling hybrid precoder and combiner designs for each transmit-receive
		pair and  reducing hardware cost, respectively. 
		\vspace{-4mm}
		\subsection{BD-ZF hybrid transceiver design}\label{S5.1}
		It is well-known that by completely eliminating the inter-user interference, the 
		BD-ZF precoding is a near-optimal scheme for multiuser massive MIMO systems. By
		considering it for our MIMO interference channels, we firstly propose a
		low-complexity BD-ZF hybrid transceiver design, in which the number of antennas
		at each transmitter is larger than the total number of receive antennas, i.e.,
		$N_{t_k}\! >\!\! \sum\nolimits_{i=1}^K N_{r_i}$, $\forall k$. For mmWave channels,
		this restriction can be relaxed further to $N_{t_k}\! >\!\! \sum\nolimits_{i=1}^K
		\text{rank}(\bm{H}_{i,k})$, $\forall k$. Specifically, by first defining the 
		leakage channel for the $k$th transmit-receive pair as $\widetilde{\bm{H}}_k\! = \!
		\big[\bm{H}_{1,k}^{\rm H} \cdots \bm{H}_{k-1,k}^{\rm H} ~ \bm{H}_{k+1,k}^{\rm H}
		\cdots \bm{H}_{K,k}^{\rm H}\big]$, $\forall k$, an orthonormal basis for the
		orthogonal complement of $\widetilde{\bm{H}}_{k}$ is given by $\widetilde{\bm{H}}_k^{\bot}
		\! \in \!\mathbb{C}^{N_{t_k}\times L_k}$ with $L_k\! = \!\big(N_{t_k}\! -
		\! \sum\nolimits_{i\neq k} N_{r_i}\big)\! \ge\! N_{s_k}$ and
		$(\widetilde{\bm{H}}_{k}^{\bot})^{\rm H}\widetilde{\bm{H}}_{k}^{\bot}\! =\! \bm{I}_{L_k}$.
		Then the fully-digital BD-ZF precoder $\bm{F}_k^{\rm{ZF}}$ at the $k$th transmitter
		for eliminating  both inter-user and intra-data interference can be expressed as
		\begin{align}\label{ZF1} 
		\bm{F}_k^{\rm{ZF}} =& \widetilde{\bm{H}}_{k}^{\bot} \widetilde{\bm{V}}_k(:,1:N_{s_k})
		\sqrt{\bm{\Lambda}}_k, ~ \forall k ,
		\end{align}
		where $\widetilde{\bm{V}}_k\! \in\! \mathbb{C}^{L_k\times L_k}$ originates from the
		SVD $\bm{H}_{k,k} \widetilde{\bm{H}}_{k}^{\bot}\! = \! \widetilde{\bm{U}}_k
		\widetilde{\bm{\Lambda}}_k\widetilde{\bm{V}}_k^H$ with $\widetilde{\bm{\Lambda}}_k\!
		=\! \text{diag}\big[\widetilde{\lambda}_{k,1}^2,\cdots ,\widetilde{\lambda}_{k,L_k}^2
		\big]$, and $\bm{\Lambda}_k\! =\! \text{diag}[f_{k,1},\cdots ,f_{k,N_{s_k}}]$ is the
		solution of the following sum rate maximization
		\begin{align}\label{ZF2} 
		\begin{array}{cl}
		\max\limits_{\{f_{k,s}, \forall k,s\}}&~ \sum\nolimits_{k=1}^{K}\sum\nolimits_{s=1}^{N_{s_k}}
		\log\big(1\! +\! \sigma_{n_k}^{-2}\widetilde{\lambda}_{k,s}^2 f_{k,s}\big) ,\\ 
		{\text{s.t.}} &~ \sum\nolimits_{s=1}^{N_{s_k}} f_{k,s} \le P_k , ~\forall k .
		\end{array}
		\end{align} It is clear that the optimal solution to the problem \eqref{ZF2} has a water-filling
		structure, i.e., $f_{k,s}\! =\! \big[\frac{1}{\mu\ln2}-\frac{\sigma_{n_k}^2}
		{\widetilde{\lambda}_{k,s}^2}\big]^{+}$, $\forall k,l$, where $\mu$ is chosen to
		satisfy $\sum\nolimits_{l=1}^{N_{s_k}} f_{k,l}\! =\! P_k$, $\forall k$.
		\subsubsection{Iterative-PP hybrid precoder design}
		Given the fully-digital BD-ZF precoder \eqref{ZF1}, we design the hybrid precoder by
		solving the following optimization problem
		\begin{align}\label{ZF3} 
		&\min\limits_{\bm{F}_{A_k},\bm{F}_{D_k}} ~\Vert \bm{F}_k^{\rm{ZF}} - \bm{F}_{A_k}
		\bm{F}_{D_k}\Vert_F^2 ,\\
		& ~~~\text{s.t.} ~~~ \vert[\bm{F}_{A_k}]_{n,m}\vert=1, 
		\Vert{\bm{F}}_{A_k}\bm{F}_{D_k}\Vert_F^2=P_k , ~ \forall n,m, k,\nonumber
		\end{align} 
		where the maximum power transmission is adopted. By introducing the new variables
		$\widetilde{\bm{F}}_{D_k}\!\!=\!\! (\bm{F}_{A_k}^{\rm H} \bm{F}_{A_k})^{\frac{1}{2}}
		\bm{F}_{D_k}$ and { $\widetilde{\bm{F}}_{A_k}\!=\!$ $ \bm{F}_{A_k} (\bm{F}_{A_k}^{\rm H}
			\bm{F}_{A_k})^{-\frac{1}{2}}$,} $\forall k$, the problem \eqref{ZF3} is rewritten as   
		\begin{align}\label{ZF4} 
		&\max\limits_{\widetilde{\bm{F}}_{A_k},\widetilde{\bm{F}}_{D_k}} \Re\big\{
		\text{Tr}\big(\widetilde{\bm{F}}_{A_k}\widetilde{\bm{F}}_{D_k} (\bm{F}_k^{\rm{ZF}})^{\rm H}
		\big)\big\} , \nonumber\\
		& ~~~ \text{s.t.} ~~~ \vert[\bm{F}_{A_k}]_{n,m}\vert=1, 
		\Vert\widetilde {\bm{F}}_{D_k}\Vert_F^2=P_k, ~\forall n,m,k.
		\end{align}  
		
		Although the problem \eqref{ZF4} is much simplified compared to the problem \eqref{ZF3},
		it is still challenging to directly design the analog precoder $\bm{F}_{A_k}$ in the
		unit-modulus space. We resort to an iterative-PP based method with two key ingredients:
		unconstrained optimal analog precoder and alternating minimization. The unconstrained
		optimal analog precoder $\bm{F}_{A_k}^{\rm{Unc}}$ to the problem \eqref{ZF4} using
		majorization theory \cite{sadek2007leakage} is summarized in the following proposition. 
		
		\begin{proposition}\label{p12}  
			The unconstrained optimal analog precoder $\bm{F}_{A_k}^{\rm{Unc}}$ to the problem
			\eqref{ZF3} is
			\begin{align}\label{p13} 
			\bm{F}_{A_k}^{\rm{Unc}} =& \bm{U}_k^{\rm{ZF}}\bm{\Lambda}_{\bm{F}_{A_k}}
			\bm{V}_{\bm{F}_{A_k}} ,
			\end{align} 
			where the unitary matrix $\bm{U}_k^{\rm{ZF}}$ comes from the SVD $\bm{F}_k^{\rm{ZF}}
			\! =\! \bm{U}_{k}^{\rm{ZF}}\bm{\Lambda}_k^{\rm{ZF}}\bm{V}_k^{\rm{ZF}}$. Moreover,
			both the diagonal matrix $\bm{\Lambda}_{\bm{F}_{A_k}}$ and the unitary matrix
			$\bm{V}_{{\bm{F}}_{A_k}}$ can be arbitrarily  chosen. 
		\end{proposition}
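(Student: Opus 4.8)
The plan is to work with the reformulated problem~\eqref{ZF4} and to exploit the fact that, once the unit-modulus constraint on $\bm{F}_{A_k}$ is dropped, the auxiliary matrix $\widetilde{\bm{F}}_{A_k}=\bm{F}_{A_k}(\bm{F}_{A_k}^{\rm H}\bm{F}_{A_k})^{-\frac{1}{2}}$ is only required to be semi-unitary, i.e., $\widetilde{\bm{F}}_{A_k}^{\rm H}\widetilde{\bm{F}}_{A_k}=\bm{I}_{N_{t_k}^{RF}}$, while $\widetilde{\bm{F}}_{A_k}$ and $\bm{F}_{A_k}$ span the same column space. First I would record that the fully-digital BD-ZF precoder~\eqref{ZF1} uses the full transmit power: since $(\widetilde{\bm{H}}_k^{\bot})^{\rm H}\widetilde{\bm{H}}_k^{\bot}=\bm{I}_{L_k}$, $\widetilde{\bm{V}}_k$ is unitary and the water-filling powers satisfy $\sum_s f_{k,s}=P_k$, one has $\Vert\bm{F}_k^{\rm ZF}\Vert_F^2=\text{Tr}(\bm{\Lambda}_k)=P_k$. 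Hence the factorization $\widetilde{\bm{F}}_{A_k}\widetilde{\bm{F}}_{D_k}=\bm{F}_k^{\rm ZF}$ is consistent with $\Vert\widetilde{\bm{F}}_{D_k}\Vert_F^2=P_k$ and makes the objective of~\eqref{ZF3} vanish, so the unconstrained optimal value of~\eqref{ZF3} is zero and is achieved precisely when $\bm{F}_{A_k}\bm{F}_{D_k}=\bm{F}_k^{\rm ZF}$.

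Next I would solve~\eqref{ZF4} by a nested maximization. For any fixed semi-unitary $\widetilde{\bm{F}}_{A_k}$, the objective $\Re\{\text{Tr}((\bm{F}_k^{\rm ZF})^{\rm H}\widetilde{\bm{F}}_{A_k}\widetilde{\bm{F}}_{D_k})\}$ is linear in $\widetilde{\bm{F}}_{D_k}$ on the sphere $\Vert\widetilde{\bm{F}}_{D_k}\Vert_F^2=P_k$, so by Cauchy--Schwarz its maximum equals $\sqrt{P_k}\,\Vert\widetilde{\bm{F}}_{A_k}^{\rm H}\bm{F}_k^{\rm ZF}\Vert_F$, attained at $\widetilde{\bm{F}}_{D_k}\propto\widetilde{\bm{F}}_{A_k}^{\rm H}\bm{F}_k^{\rm ZF}$. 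It then remains to maximize $\Vert\widetilde{\bm{F}}_{A_k}^{\rm H}\bm{F}_k^{\rm ZF}\Vert_F^2=\text{Tr}\big(\widetilde{\bm{F}}_{A_k}^{\rm H}\bm{F}_k^{\rm ZF}(\bm{F}_k^{\rm ZF})^{\rm H}\widetilde{\bm{F}}_{A_k}\big)$ over semi-unitary $\widetilde{\bm{F}}_{A_k}$. By the Ky Fan trace-maximization principle (equivalently, by von Neumann's trace inequality, the majorization-theoretic tool alluded to in~\cite{sadek2007leakage}), this is at most the sum of the $N_{t_k}^{RF}$ largest eigenvalues of $\bm{F}_k^{\rm ZF}(\bm{F}_k^{\rm ZF})^{\rm H}$; since $\text{rank}(\bm{F}_k^{\rm ZF})\le N_{s_k}\le N_{t_k}^{RF}$, that sum equals $\text{Tr}(\bm{F}_k^{\rm ZF}(\bm{F}_k^{\rm ZF})^{\rm H})=\Vert\bm{F}_k^{\rm ZF}\Vert_F^2=P_k$, so the optimum of~\eqref{ZF4} is $P_k$, matching the vanishing distance found above.

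The characterization of the maximizer is where the claimed freedom of $\bm{\Lambda}_{\bm{F}_{A_k}}$ and $\bm{V}_{\bm{F}_{A_k}}$ comes in, and it is the delicate part. The positive eigenvalues of $\bm{F}_k^{\rm ZF}(\bm{F}_k^{\rm ZF})^{\rm H}$ are carried by the first $\text{rank}(\bm{F}_k^{\rm ZF})$ columns of $\bm{U}_k^{\rm ZF}$ and the remaining eigenvalues are zero, so the maximizing subspace is not unique: a semi-unitary $\widetilde{\bm{F}}_{A_k}$ is optimal precisely when its column space contains $\text{range}(\bm{F}_k^{\rm ZF})=\text{span}\{\bm{U}_k^{\rm ZF}(:,1:\text{rank}(\bm{F}_k^{\rm ZF}))\}$, the other orthonormal columns being free within the zero eigenspace. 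Equivalently --- $\widetilde{\bm{F}}_{A_k}$ and $\bm{F}_{A_k}$ sharing a column space --- $\bm{F}_{A_k}^{\rm Unc}$ can be written as $\bm{U}_k^{\rm ZF}\bm{\Lambda}_{\bm{F}_{A_k}}\bm{V}_{\bm{F}_{A_k}}$ with $\bm{U}_k^{\rm ZF}$ the left-singular-vector matrix of $\bm{F}_k^{\rm ZF}$ as in the statement, $\bm{\Lambda}_{\bm{F}_{A_k}}$ any diagonal matrix whose leading block keeps $\text{range}(\bm{F}_k^{\rm ZF})\subseteq\text{range}(\bm{F}_{A_k})$, and $\bm{V}_{\bm{F}_{A_k}}$ any unitary matrix; the paired digital precoder $\bm{F}_{D_k}=(\bm{F}_{A_k}^{\rm H}\bm{F}_{A_k})^{-1}\bm{F}_{A_k}^{\rm H}\bm{F}_k^{\rm ZF}$ then reproduces $\bm{F}_k^{\rm ZF}$ exactly, so neither $\bm{\Lambda}_{\bm{F}_{A_k}}$ nor $\bm{V}_{\bm{F}_{A_k}}$ changes the objective of~\eqref{ZF3}. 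I expect the main obstacle to be making this final step airtight --- reconciling the non-uniqueness of the optimal eigenspace, keeping the thin-versus-full SVD dimensions of $\bm{U}_k^{\rm ZF}$, $\bm{\Lambda}_{\bm{F}_{A_k}}$ and $\bm{V}_{\bm{F}_{A_k}}$ mutually consistent, and pinning down the rank condition on $\bm{\Lambda}_{\bm{F}_{A_k}}$ that the statement leaves implicit.
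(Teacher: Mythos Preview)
Your proposal is correct and follows essentially the same route as the paper: both drop the unit-modulus constraint so that $\widetilde{\bm{F}}_{A_k}=\bm{F}_{A_k}(\bm{F}_{A_k}^{\rm H}\bm{F}_{A_k})^{-1/2}$ is merely semi-unitary (hence $\bm{\Lambda}_{\bm{F}_{A_k}}$ is irrelevant to the objective), and both appeal to the Ky~Fan/von~Neumann trace inequality to identify $\bm{U}_{\bm{F}_{A_k}}=\bm{U}_k^{\rm ZF}$ with $\bm{V}_{\bm{F}_{A_k}}$ free. The only cosmetic difference is that you carry out the inner maximization over $\widetilde{\bm{F}}_{D_k}$ explicitly via Cauchy--Schwarz before invoking Ky~Fan, whereas the paper absorbs that step into a single appeal to Fan's theorem by noting the unitary invariance of the Frobenius-norm constraint on $\widetilde{\bm{F}}_{D_k}$; your flagged rank caveat on $\bm{\Lambda}_{\bm{F}_{A_k}}$ is a fair scruple that the paper simply leaves implicit.
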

		
		\begin{proof}
			Define the SVDs $\bm{F}_{A_k}\!\! =\! \bm{U}_{\bm{F}_{A_k}}\!\bm{\Lambda}_{\bm{F}_{A_k}}\!
			\bm{V}_{\bm{F}_{A_k}}$, $\bm{F}_{D_k}\!\! = \!\bm{U}_{\bm{F}_{D_k}}\!\bm{\Lambda}_{\bm{F}_{D_k}}\!
			\bm{V}_{\bm{F}_{D_k}}$ and $\widetilde{\bm{F}}_{D_k}\!\! =\! \bm{U}_{\widetilde{\bm{F}}_{D_k}}\!
			\bm{\Lambda}_{\widetilde{\bm{F}}_{D_k}}\! \bm{V}_{\widetilde{\bm{F}}_{D_k}}$, where
			$\{\bm{U}_{\bm{F}_{A_k}},\bm{U}_{\bm{F}_{D_k}},\bm{U}_{\widetilde{\bm{F}}_{D_k}}\}$ and
			$\{\bm{V}_{\bm{F}_{A_k}},\bm{V}_{\bm{F}_{D_k}},\bm{V}_{\widetilde{\bm{F}}_{D_k}}\}$ are
			the sets of unitary matrices, while $\{\bm{\Lambda}_{\bm{F}_{A_k}},\bm{\Lambda}_{{\bm{F}}_{D_k}},
			\bm{\Lambda}_{\widetilde{\bm{F}}_{D_k}}\}$ are the corresponding diagonal matrices with diagonal
			elements arranged in a decreasing order. {It is observed  from \eqref{ZF4} that $\widetilde{\bm{F}}_{D_k}$
				subject to the power constraint $\Vert\widetilde {\bm{F}}_{D_k}\Vert_F^2\! =\! P_k$ is 
				unitarily invariant.} In other words, both the unitary matrices $\bm{U}_{\widetilde{\bm{F}}_{D_k}}$
			and $\bm{V}_{\widetilde{\bm{F}}_{D_k}}$ are unconstrained. In addition, observing from
			$\widetilde{\bm{F}}_{A_k}\! =\! \bm{F}_{A_k}(\bm{F}_{A_k}^{\rm H}\bm{F}_{A_k})^{-\frac{1}{2}}
			\! =\! \bm{U}_{\bm{F}_{A_k}}\bm{V}_{\bm{F}_{A_k}}$,  {we find that the diagonal matrix
				$\bm{\Lambda}_{\bm{F}_{A_k}}$ actually has no effect on the maximum value of the
				objective function in \eqref{ZF4}, which is also applicable  to the problem  \eqref{ZF3}. Further by applying  \cite[B.2.~Theorem (Fan,1951)]{booktypical} to the problem  \eqref{ZF4},}
			the unconstrained optimal analog precoder $\bm{F}_{A_k}^{\rm{Unc}}$ to the problem
			\eqref{ZF3} is readily derived as  \eqref{p13}, where $\bm{\Lambda}_{\bm{F}_{A_k}}$
			and $\bm{V}_{{\bm{F}}_{A_k}}$ are arbitrarily chosen.
		\end{proof}
		\vspace{-2mm}
		Based on Proposition~\ref{p12},  we then aim to find an unit-modulus analog precoder
		$\bm{F}_{A_k}$ with the minimum Euclidean distance to the unconstrained optimal
		$\bm{F}_{A_k}^{\rm{Unc}}$, which is formulated as 
		\begin{align}\label{ZFp1} 
		&\min\limits_{\bm{\Lambda}_{\bm{F}_{A_k}},\bm{V}_{\bm{F}_{A_k}},\bm{F}_{A_k}} \Vert
		\bm{F}_{A_k}^{\rm{Unc}}\!-\!{\bm{F}}_{A_k} \Vert_F^2\! =\!\Vert \bm{U}_{k}^{\rm{ZF}}
		\bm{\Lambda}_{\bm{F}_{A_k}}\bm{V}_{\bm{F}_{A_k}}\! - \!\bm{F}_{A_k} \Vert_F^2 , \nonumber\\
		&~~~~~~~~
		\text{s.t.} ~~~~~ \vert[\bm{F}_{A_k}]_{n,m}\vert=1,~~~\forall n,m, k .
		\end{align}
		
		Since the diagonal matrix $\bm{\Lambda}_{\bm{F}_{A_k}}$ has no effect on the
		problem \eqref{ZF3}, we consider the unconstrained diagonal matrix
		$\bm{\Lambda}_{\bm{F}_{A_k}}$. Although the problem \eqref{ZFp1} is not jointly
		convex w.r.t $\{\bm{\Lambda}_{\bm{F}_{A_k}},\bm{V}_{\bm{F}_{A_k}},\bm{F}_{A_k}\}$,
		it is a `semi-convex' problem, in which the closed-form solution of each variable
		is easily obtained when fixing all the others, thus enabling alternating
		optimization. Specifically, given $\bm{\Lambda}_{\bm{F}_{A_k}}$ and
		$\bm{V}_{\bm{F}_{A_k}}$, the optimal analog precoder $\bm{F}_{A_k}$ to the problem
		\eqref{ZFp1} can be obtained via PP
		\begin{align}\label{ZFp2} 
		\bm{F}_{A_k} =& e^{\textsf{j}\arg\big(\bm{U}_{k}^{\rm{ZF}}\bm{\Lambda}_{\bm{F}_{A_k}}
			\bm{V}_{\bm{F}_{A_k}}\big)}, ~ \forall k.
		\end{align}
		By fixing $\bm{F}_{A_k}$ and $\bm{V}_{\bm{F}_{A_k}}$, after some algebraic manipulations,
		the optimal diagonal matrix $\bm{\Lambda}_{\bm{F}_{A_k}}$ to the problem \eqref{ZFp1}
		is designed by solving the optimization
		\begin{align}\label{ZFp3} 
		&\max\limits_{\bm{\Lambda}_{\bm{F}_{A_k}}}~~ \Re\big\{ \text{Tr}\big(\bm{V}_{\bm{F}_{A_k}}
		\bm{F}_{A_k}^{\rm H}\bm{U}_{k}^{\rm{ZF}}\bm{\Lambda}_{\bm{F}_{A_k}}\big)\big\} , \nonumber\\&~~
		\text{s.t.} ~~~~ \vert[\bm{F}_{A_k}]_{n,m}\vert=1,\forall n,m,k,
		\end{align}
		which has the closed-form solution
		\begin{align}\label{eq40} 
		\big[\bm{\Lambda}_{\bm{F}_{A_k}}\big]_{i,i}\! =\! \Re\big\{ \big[\bm{V}_{\bm{F}_{A_k}}
		\bm{F}_{A_k}^{\rm H}\bm{U}_{k}^{\rm{ZF}}\big]_{i,i}\big\}, ~ i\!=\!1,\cdots ,N_{t_k}^{RF} .
		\end{align}
		Finally, for the fixed $\bm{\Lambda}_{\bm{F}_{A_k}}$ and $\bm{F}_{A_k}$, the
		optimal unitary matrix $\bm{V}_{\bm{F}_{A_k}}$ is given by
		\begin{align}\label{ZFp4} 
		\bm{V}_{\bm{F}_{A_k}} =& \bm{V}_{A_k}^{\rm H}\bm{U}_{A_k}^{\rm H},~ \forall k,
		\end{align}
		where the unitary matrices $\bm{V}_{A_k}$ and $\bm{U}_{A_k}$ come from the SVD
		$\bm{F}_{A_k}^{\rm H}\bm{U}_{k}^{\rm{ZF}}\bm{\Lambda}_{\bm{F}_{A_k}}\! =\! \bm{U}_{A_k}
		\bm{\Lambda}_{A_k}\bm{V}_{A_k}$.
		
		Through alternating optimization among  \eqref{ZFp2}, \eqref{eq40} and \eqref{ZFp4},
		the iterative PP-based unit-modulus analog precoder $\bm{F}_{A_k}$  can be finally
		obtained. Then by applying Lagrangian multiplier method to the problem \eqref{ZF3},
		the optimal digital precoder $\bm{F}_{D_k}$  given $\bm{F}_{A_k}$ is expressed as   
		\begin{align}\label{ZF7} 
		\bm{F}_{D_k} =& \frac{\sqrt{P_k} \bm{F}_{A_k}^{\rm H}\bm{F}_{k}^{\rm{ZF}}}
		{\Vert(\bm{F}_{A_k}^H\bm{F}_{A_k})^{\frac{1}{2}}\bm{F}_{A_k}^{\rm H}\bm{F}_{k}^{\rm{ZF}}
			\Vert_F}, ~ \forall k.
		\end{align} 
		
		Even when the distance between the hybrid precoder and the fully-digital BD-ZF
		precoder is minimized, we still cannot guarantee the hybrid precoder's capability
		of realizing zero inter-user interference, since it may not be exactly located in
		the null-space of the corresponding leakage channels. However, the effectiveness
		of the above iterative-PP hybrid precoder design on suppressing the inter-user
		interference is demonstrated in the following proposition. 
		
		\begin{proposition}\label{Le2} 
			For mmWave channels, once the hybrid precoder $\bm{F}_{A_k}{\bm{F}}_{D_k}$
			is obtained from \eqref{ZFp2} and \eqref{ZF7}, the resultant inter-user interference
			to the $i$th receiver, where $i\neq k$, satisfies
			\begin{align}\label{eq43}
			\lim_{N_{t_k} \to +\infty} \bm{H}_{i,k}{\bm{F}}_{A_k}{\bm{F}}_{D_k} = \bm{0}, ~ \forall k=1,\cdots , K,~
			i\neq k .
			\end{align}
		\end{proposition}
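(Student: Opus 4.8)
The plan is to combine the \emph{exact} interference-nulling property of the fully-digital BD-ZF precoder with Proposition~\ref{p1} and the finite-scattering structure of mmWave channels. By construction $\bm{F}_k^{\rm ZF}=\widetilde{\bm{H}}_k^{\bot}\widetilde{\bm{V}}_k(:,1:N_{s_k})\sqrt{\bm{\Lambda}}_k$ lies in the column space of $\widetilde{\bm{H}}_k^{\bot}$, the orthogonal complement of $\mathrm{col}(\widetilde{\bm{H}}_k)\supseteq\mathrm{col}(\bm{H}_{i,k}^{\rm H})$, so $\bm{H}_{i,k}\bm{F}_k^{\rm ZF}=\bm{0}$ exactly for every $i\neq k$ and every finite $N_{t_k}$. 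Since for mmWave channels $\bm{H}_{i,k}=\sqrt{N_{r_i}N_{t_k}/L_{i,k}}\sum_{l}\alpha_i^l\bm{a}_r(\theta_i^l)\bm{a}_t^{\rm H}(\psi_k^l)$, and \eqref{ZF7} together with the massive-MIMO property $\bm{F}_{A_k}^{\rm H}\bm{F}_{A_k}\approx N_{t_k}\bm{I}$ forces $\|\bm{F}_{D_k}\|_F=O(1/\sqrt{N_{t_k}})$, it suffices to prove that $\bm{a}_t^{\rm H}(\psi_k^l)\bm{F}_{A_k}\to\bm{0}$ for every transmit array response $\bm{a}_t(\psi_k^l)$ of an interfering link $i\neq k$; indeed $\|\bm{H}_{i,k}\bm{F}_{A_k}\bm{F}_{D_k}\|_F=O(\sqrt{N_{t_k}})\cdot\max_l\|\bm{a}_t^{\rm H}(\psi_k^l)\bm{F}_{A_k}\|\cdot O(1/\sqrt{N_{t_k}})$ then tends to $0$.

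Next I would locate the column space of $\bm{F}_k^{\rm ZF}$. As $\widetilde{\bm{V}}_k$ collects the dominant right singular vectors of $\bm{H}_{k,k}\widetilde{\bm{H}}_k^{\bot}$, whose rows lie in $\mathrm{rowspace}(\bm{H}_{k,k})=\mathrm{span}\{\bm{a}_t(\psi_k^l)\}_{l=1}^{L_{k,k}}=:\mathcal{S}_k$, the columns of $\bm{F}_k^{\rm ZF}$ lie in $\mathrm{span}\big\{\big(\bm{I}-\bm{P}_{\mathrm{col}(\widetilde{\bm{H}}_k)}\big)\bm{a}_t(\psi_k^l)\big\}_{l=1}^{L_{k,k}}$; by Proposition~\ref{p1} the interfering transmit responses spanning $\mathrm{col}(\widetilde{\bm{H}}_k)$ are asymptotically orthogonal to the desired ones, so $\bm{P}_{\mathrm{col}(\widetilde{\bm{H}}_k)}\bm{a}_t(\psi_k^l)=O(1/N_{t_k})$ and $\mathrm{col}(\bm{F}_k^{\rm ZF})$ — hence the column space of the left singular matrix $\bm{U}_k^{\rm ZF}$ and, through Proposition~\ref{p12}, of $\bm{F}_{A_k}^{\rm Unc}=\bm{U}_k^{\rm ZF}\bm{\Lambda}_{\bm{F}_{A_k}}\bm{V}_{\bm{F}_{A_k}}$ — lies in $\mathcal{S}_k$ up to an $O(1/N_{t_k})$ perturbation. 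In particular each column of $\bm{F}_{A_k}^{\rm Unc}$ is, asymptotically, the sample at the $N_{t_k}$ ULA elements of a trigonometric polynomial whose frequencies are among $\{\sin\psi_k^l\}_{l=1}^{L_{k,k}}$; and since the generators of $\mathcal{S}_k$ are themselves unit-modulus, a combination of at most $N_{t_k}^{RF}$ of them is even realizable exactly by the hybrid structure, so the attainable value of $\|\bm{F}_k^{\rm ZF}-\bm{F}_{A_k}\bm{F}_{D_k}\|_F$ is itself $O(1/N_{t_k})$.

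The decisive step is to show that the phase-projection \eqref{ZFp2} (and its iterative refinement) does not leak energy into the interfering directions. Because a column $\bm{f}_{A_k}^{(m)}$ of $\bm{F}_{A_k}$ is the entrywise phase of a trigonometric polynomial with frequencies in $\{\sin\psi_k^l\}_{l=1}^{L_{k,k}}$, it is a Bohr almost-periodic sequence whose harmonic support lies in the subgroup of the reals generated by $\{\sin\psi_k^l\}$ together with their pairwise differences. An interfering angle $\psi_k^{l'}$ (from $\bm{H}_{i,k}$, $i\neq k$) generically does not lie in this subgroup, so $e^{\textsf{j}(n-1)\pi\sin\psi_k^{l'}}[\bm{f}_{A_k}^{(m)}]_n$ is almost periodic with no constant (DC) component, and $\bm{a}_t^{\rm H}(\psi_k^{l'})\bm{f}_{A_k}^{(m)}$ equals, up to the $1/\sqrt{N_{t_k}}$ normalization, the Ces\`aro average of this sequence over $n=1,\dots,N_{t_k}$, hence is $O(1/\sqrt{N_{t_k}})\to0$ for each column $m$. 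Thus $\bm{a}_t^{\rm H}(\psi_k^{l'})\bm{F}_{A_k}\to\bm{0}$ for every interfering $\psi_k^{l'}$, which by the reduction of the first paragraph gives $\lim_{N_{t_k}\to\infty}\bm{H}_{i,k}\bm{F}_{A_k}\bm{F}_{D_k}=\bm{0}$, i.e.\ \eqref{eq43}.

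The main obstacle is precisely this last step — quantifying the energy the phase-projection operator leaks out of $\mathcal{S}_k$ into the interfering subspaces, and controlling the convergence rate of the Ces\`aro averages. The clean case is line-of-sight--dominated propagation ($L_{k,k}=1$): then $\bm{F}_{A_k}^{\rm Unc}$ is already unit-modulus, the phase-projection is exact, $\bm{F}_{A_k}\propto\bm{a}_t(\psi_k^1)$, and $\bm{a}_t^{\rm H}(\psi_k^{l'})\bm{a}_t(\psi_k^1)=O(1/N_{t_k})$ is immediate; the general multipath case requires the almost-periodicity argument above, whose operative hypothesis is that the trigonometric polynomial defining each column of $\bm{F}_{A_k}^{\rm Unc}$ does not vanish, so that its phase is smooth and its harmonic coefficients summable — which is where the sparse, few-path structure of mmWave channels is genuinely used.
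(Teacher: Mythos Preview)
Your argument overlooks the key simplification that the paper exploits, and the workaround you propose has a genuine gap.

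The paper's route is much shorter. Because the transmit steering vectors $\{\bm{a}_t(\psi_k^l)\}_{l=1}^{L_{k,k}}$ become orthonormal as $N_{t_k}\to\infty$, the SVD of $\bm{H}_{k,k}$ (and, via Proposition~\ref{p1}, of $\bm{H}_{k,k}\widetilde{\bm{H}}_k^{\bot}$) has right singular vectors equal to \emph{individual} steering vectors, not linear combinations of them. Hence $\bm{F}_k^{\rm ZF}\to\bm{A}_t^k\sqrt{\overline{\bm{\Lambda}}_k}$ columnwise, and the columns of $\bm{U}_k^{\rm ZF}$ are asymptotically the unit-modulus vectors $\bm{a}_t(\psi_k^l)$ themselves. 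Since Proposition~\ref{p12} leaves $\bm{\Lambda}_{\bm{F}_{A_k}}$ and $\bm{V}_{\bm{F}_{A_k}}$ free, the choice $\bm{\Lambda}_{\bm{F}_{A_k}}=\bm{V}_{\bm{F}_{A_k}}=\bm{I}$ makes $\bm{F}_{A_k}^{\rm Unc}$ already unit-modulus, so the iterative-PP problem \eqref{ZFp1} has optimal value zero and the phase projection is \emph{exact}: $\bm{F}_{A_k}^{\infty}=\bm{A}_t^k(:,1{:}N_{t_k}^{RF})$. The conclusion then follows directly from \eqref{Ho2}. In other words, your ``clean LoS case'' is, asymptotically, the only case --- every column of $\bm{F}_{A_k}^{\rm Unc}$ is a single complex exponential, not a multi-tone sum.

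By treating the columns of $\bm{F}_{A_k}^{\rm Unc}$ as generic trigonometric polynomials you forfeit this, and are then forced into the almost-periodicity argument. That step is where the real gap lies: the entrywise phase of a trigonometric polynomial $p(n)=\sum_l c_l e^{\textsf{j}\pi n\sin\psi_k^l}$ is \emph{not} in general a Bohr almost-periodic sequence with harmonic support in the subgroup generated by $\{\sin\psi_k^l\}$. Near any zero of $p$ the phase jumps discontinuously, and even when $p$ is bounded away from zero, $\arg p(n)$ typically has a spectrum supported on the \emph{integer lattice} generated by the frequencies, which is dense in $\mathbb{R}$ when the $\sin\psi_k^l$ are rationally independent --- so ``the interfering frequency generically does not lie in this subgroup'' is not a usable hypothesis. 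Your own caveat (``operative hypothesis is that the polynomial does not vanish'') flags exactly this, but does not resolve it. The fix is not to repair the harmonic-analysis argument but to use the freedom in $\bm{V}_{\bm{F}_{A_k}}$ built into the iterative-PP: it lets you rotate back to single steering vectors before projecting, which makes the whole difficulty disappear.
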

		\vspace{-4mm}
		\begin{proof}
			See Appendix~\ref{APb}.
		\end{proof}
		Proposition~\ref{Le2} reveals that in large-scale mmWave scenarios, the above
		iterative-PP hybrid precoder also achieves the near-zero inter-user interference,
		like the fully-digital BD-ZF precoder. However, it may not work well in 
		Rayleigh channels due to the following two reasons. The fully-digital BD-ZF
		precoder with a strict restriction on the numbers of transmit and receive
		antennas may be infeasible in rich scattering scenarios, especially for a large
		number of transmit-receive pairs. Also Proposition~\ref{Le2} is not applicable
		to the channels without sparsity. These facts motivate us to propose another
		more general low-complexity hybrid precoder design in Section~\ref{lowB}.
		
		\subsubsection{MM-based hybrid combiner design}
		Given the hybrid precoder \eqref{ZFp2} and \eqref{ZF7}, the MMSE hybrid combiner
		design is then formulated as\vspace{-2mm}
		\begin{align}\label{cz1} 
		&\min\limits_{\bm{G}_{A_k},\bm{G}_{D_k}} \mathbb{E}\big[\Vert \bm{s}_k\!-\!\bm{G}_{D_k}^{\rm H}
		\bm{G}_{A_k}^{\rm H}\bm{y}_k\Vert^2\big]\!\! \overset{(a)}{=}\! \!\Vert \bm{R}_{\bm{y}_k}^{\frac{1}{2}}
		(\widehat{\bm{G}}_k\!-\!\bm{G}_{A_k}\bm{G}_{D_k}) \Vert_F^2 , \nonumber\\
		& ~~~~\text{s.t.} ~~~~
		\vert[\bm{G}_{A_k}]_{n,m}\vert=1 ,
		\end{align}
		where $\bm{R}_{\bm{y}_k}\! =\! \mathbb{E}[\bm{y}_k\bm{y}_k^{\rm H}]\! =\!\!
		\sum\nolimits_{i=1}^K \bm{H}_{k,i}\bm{F}_{A_i}\bm{F}_{D_i}\bm{F}_{D_i}^{\rm H}
		\bm{F}_{A_i}^{\rm H}\bm{H}_{k,i}^{\rm H}\! +\! \sigma_{n_k}^2\bm{I}_{N_{r_k}}$,
		and $\widehat{\bm{G}}_k\! =\! \bm{R}_{\bm{y}_k}^{-1}\bm{H}_{k,k}\bm{F}_{A_k}\bm{F}_{D_k}$,
		while the equality (a) holds by following the  similar derivations in
		\cite{el2014spatially}. Obviously, the problem \eqref{cz1} is more complicated 
		than the problem \eqref{ZF3} due to introducing $\bm{R}_{\bm{y}_k}$. Fortunately,
		the proposed MM-based alternating optimization is  still applicable, as shown
		below.
		
		When the analog  combiner $\bm{G}_{A_k}$ is fixed, the optimal digital combiner
		$\bm{G}_{D_k}$ has the closed-form
		\begin{align}\label{cz2} 
		\bm{G}_{D_k} =& \big(\bm{G}_{A_k}^{\rm H}\bm{R}_{\bm{y}_k}\bm{G}_{A_k}\big)^{-1}
		\bm{G}_{A_k}^{\rm H}\bm{R}_{\bm{y}_k}\widehat{\bm{G}}_{k} .
		\end{align}
		Given the fixed digital combiner $\bm{G}_{D_k}$, the MM  method is used to tackle
		the nonconvex problem \eqref{cz1} in terms of $\bm{G}_{A_k}$ by finding an
		appropriate majorized problem, which is
		\begin{align}\label{cz3} 
		\min\limits_{\bm{g}_{A_k}} \Re\{ \bm{g}_{A_k}^{\rm H}\widetilde{\bm{r}}_k\}, ~
		\text{s.t.} ~ \vert[\bm{g}_{A_k}]_{n}\vert\!=\!\!1 , \forall n\!=\!\!1,\cdots, N_{t_k}N_{t_k}^{RF} ,
		\end{align} 
		where $\widetilde{\bm{r}}_k\! =\! \big(\widetilde{\bm{R}}_k\! -\!
		\lambda_{\max}(\widetilde{\bm{R}}_k)\bm{I}_{N_{r_k}N_{r_k}^{RF}}\big)\bm{g}_{A_k}^{(l)}
		\! -\! \bm{r}_k$, $\bm{r}_k^{\rm H}\! =\! \text{vec}\big((\bm{G}_{D_k}
		\widehat{\bm{G}}_k^{\rm H}\bm{R}_{\bm{y}_k})^{\rm T}\big)^{\rm T}$ and
		$\widetilde{\bm{R}}_k\! =\! \big(\bm{G}_{D_k}\bm{G}_{D_k}^{\rm H}\big)^{\rm T}\!
		\otimes\! \bm{R}_{\bm{y}_k}$. The semi closed-form solution to the problem
		\eqref{cz3} is given by
		\begin{align}\label{cz4} 
		\bm{g}_{A_k} =& \text{vec}(\bm{G}_{A_k}) =  -e^{\textsf{j}\arg (\widetilde{\bm{r}}_k)} .
		\end{align}
		Due to the iterative nature between \eqref{cz2} and \eqref{cz4}, the hybrid combiner
		obtained better matches with the iterative-PP  hybrid precoder in \eqref{ZFp2} and
		\eqref{ZF7}.
		
		By integrating the sparse recovery problems \eqref{ZF3} and \eqref{cz1}, the proposed
		BD-ZF hybrid transceiver design is summarized in Algorithm~\ref{AA2}.
		
		{\renewcommand\baselinestretch{0.9}\selectfont
			\begin{algorithm}[tp!]
				\caption{Low-complexity BD-ZF/SLNR-Max hybrid transceiver designs}
				\label{AA2} 
				\begin{small}
					\begin{algorithmic}[1]
						\REQUIRE{BD-ZF/SLNR-Max fully-digital precoder $\bm{F}_k^{\rm{ZF}}\big/\bm{F}_k^{\rm{SL}}$,
							$\forall k$, derived from \eqref{ZF1}/\eqref{SL5}; initial analog precoder $\bm{F}_{A_k}^{(0)}$
							and combiner $\bm{G}_{A_k}^{(0)}$, $\forall k$, derived from \eqref{TWA2};
							outer iteration indexed $I_{t}=0$ and $I_{r}=0$.}
						\REPEAT
						\STATE{Fix $\bm{F}_{A_k}^{(I_t)}$, $\forall k$, calculate $\bm{F}_{D_k}^{(I_{t}+1)}$,
							$\forall k$, according to \eqref{ZF7}.}
						\STATE{Fix $\bm{F}_{D_k}^{(I_{t}+1)}$, $\forall k$, calculate $\bm{F}_{A_k}^{(I_{t})}$, 
							$\forall k$, using MM method as in Algorithm~\ref{AL1}.}
						\STATE{Set $I_t=I_t+1$.}
						\UNTIL{Objective function value of problem \eqref{ZF3} converges.}
						\STATE{Calculate normalized $\widehat{\bm{F}}_{D_k}^{(I_t)}=\frac{\sqrt{P_k}}
							{\Vert \bm{F}_{A_k}\bm{F}_{D_k}^{\rm{ZF}}\Vert_F} \bm{F}_{D_k}^{\rm{ZF}}$, $\forall k$.}
						\STATE{Fix $\bm{F}_{A_k}^{(I_{t})}$ and $\widehat{\bm{F}}_{D_k}^{(I_{t})}$, $\forall k$,
							calculate fully-digital combiner $\widehat{\bm{G}}_k$, $\forall k$.}
						\REPEAT
						\STATE{Fix $\bm{G}_{A_k}^{(I_{r})}$, $\forall k$, calculate $\bm{G}_{D_k}^{(I_{r}+1)}$,
							$\forall k$, according to \eqref{cz2}.}
						\STATE{Fix $\bm{G}_{D_k}^{(I_{r}+1)}$, $\forall k$, calculate $\bm{G}_{A_k}^{(I_{r})}$, 
							$\forall k$, using MM method as in Algorithm~\ref{AL1}.}
						\STATE{Set $I_r=I_r+1$.}
						\UNTIL{Objective function value of problem \eqref{cz1} converges.}
						\ENSURE{$\big\{\bm{G}_{D_k}^{(I_{r})},\bm{G}_{A_k}^{(I_{r})},\widehat{\bm{F}}_{D_k}^{(I_{t})},
							\bm{F}_{A_k}^{(I_{t})}$, $\forall k\big\}$.}
					\end{algorithmic}
				\end{small}
			\end{algorithm}
		}
		%
		\vspace{-3mm}
		\subsection{SLNR-Max hybrid transceiver design}\label{lowB}
		The drawbacks of the BD-ZF technique are the restriction on the number of antennas
		and the noise enhancement. We consider alternative design based on SLNR maximization.
		The SLNR of the $k$th transmitter is defined as the ratio of the received signal
		power at the desired $k$th receiver to the interference (leakage) at other receivers
		plus noise power \cite{sadek2007leakage}
		\begin{align}\label{SL2} 
		\text{SLNR}_k  \! \!=   \!\!\frac{\text{Tr}\big((\bm{F}_k^{\rm{SL}})^{\rm H}\bm{H}_{k,k}^{\rm H}
			\bm{H}_{k,k}\bm{F}_{k}^{\rm{SL}}\big)}{\text{Tr}\big(\sigma_{n_k}^2\bm{I}_{N_{t_k}^{RF}}
			\!+  \!\sum\nolimits_{i\neq k} (\bm{F}_{k}^{\rm{SL}})^{\rm H}\bm{H}_{k,i}^{\rm H}\bm{H}_{k,i}
			\bm{F}_k^{\rm{SL}}\big)} .
		\end{align}
		Then the SLNR-Max fully-digital precoder for each transmit-receiver pair is designed
		as \cite{sadek2007leakage}\vspace{-1mm}
		\begin{align}\label{SL3} 
		\bm{F}_k^{\rm{SL}}  \! \!= \! \! \arg \max~\! \text{SLNR}_k , ~\! {\rm{s.t.}} ~
		\text{Tr}\big((\bm{F}_k^{\rm{SL}})^{\rm H}\bm{F}_k^{\rm{SL}}\big) \!\le  \!P_k , \forall k.
		\end{align}
		Define the generalized EVD for the  matrix pencil $\big(\bm{H}_{k,k}^{\rm H}
		\bm{H}_{k,k},\frac{N_{r_k}\sigma_{n_k}^2}{P_k}\bm{I}_{N_{t_k}^{RF}}\! +\!
		\sum\nolimits_{i\neq k}\bm{H}_{k,i}^{\rm H} \bm{H}_{k,i}\big)$ as\vspace{-1mm}
		\begin{align}\label{SL4} 
		\left\{  \! \! \!\begin{array}{l}
		\bm{T}_k^{\rm H}\bm{H}_{k,k}^{\rm H}\bm{H}_{k,k}\bm{T}_k \!=  \! \!\bm{\Sigma}_{k}  \!= \! \!
		\text{diag}[\sigma_{k,1},\cdots ,\sigma_{k,N_{t_k}^{RF}}] , \\
		\bm{T}_k^{\rm H}\!\Big( \frac{N_{r_k}\sigma_{n_k}^2}{P_k}\bm{I}_{N_{t_k}^{RF}}  \! \!+  \!\!
		\sum\nolimits_{i\neq k} \bm{H}_{k,i}^{\rm H}\bm{H}_{k,i}\Big)\!\bm{T}_k \! \! = \! \! \bm{I}_{N_{t_k}^{RF}}, \forall k,
		\end{array}\right.
		\end{align} 
		where the columns of $\bm{T}_k\! \in\! \mathbb{C}^{N_{t_k}^{RF}\times N_{t_k}^{RF}}$
		and the diagonal elements of $\bm{\Sigma}_{k}$ are the generalized eigenvectors and
		eigenvalues, respectively. Then the optimal SLNR-Max fully-digital precoder is given by
		\begin{align}\label{SL5} 
		\bm{F}_k^{\rm SL} \!\!=\!\! \sqrt{\frac{P_k}{\text{Tr}\big(\bm{T}_k^{\rm H}(:,1\!:\!N_{s_k})
				\bm{T}_k(:,1\!:\!N_{s_k})\big)}} \bm{T}_k(:,1\!:\!N_{s_k}) .
		\end{align}
		Similarly to the BD-ZF hybrid design, we formulate the SLNR-Max hybrid design by
		minimizing the Euclidean distance between \eqref{SL5} and the hybrid counterpart as\vspace{-1mm}
		\begin{align}\label{SL6} 
		&\min\limits_{\bm{F}_{A_k},\widetilde{\bm{F}}_{D_k}} \Vert \bm{F}_k^{\rm{SL}} -
		\bm{F}_{A_k}\bm{F}_{D_k}\Vert_F^2 , \nonumber\\
		&~~~ \text{s.t.} ~~~ \vert[\bm{F}_{A_k}]_{n,m}\vert=1,
		\Vert\bm{F}_{A_k}\bm{F}_{D_k}\Vert_F^2 =P_k,  \forall n,m, k .
		\end{align}
		
		The problem \eqref{SL6} can be effectively solved following the same approach
		of solving the problem \eqref{ZF3} by replacing $\bm{F}_k^{\rm{ZF}}$ with 
		$\bm{F}_k^{\rm{SL}}$, i.e., the MM-based alternating optimization is applicable.
		Additionally, once the SLNR-Max hybrid precoder for each transmit-receive pair
		is obtained, the corresponding MMSE hybrid combiner design can be independently
		carried out as in \eqref{cz1}. This  SLNR-Max hybrid design is also summarized
		in Algorithm~\ref{AA2}.
		\vspace{-4mm}
		\subsection{Partially-connected hybrid transceiver structure}\label{S5.3}
		In the partially-connected structure, each RF chain at both ends is only connected
		with a part of the antenna array. Specifically, at the $k$th transmitter (receiver),
		each RF chain is only connected with $N_{t_k}/N_{t_k}^{RF}$ ($N_{r_k}/N_{r_k}^{RF}$)
		antennas, and thus the analog precoder $\bm{F}_{A_k}$ and combiner $\bm{G}_{A_k}$,
		$\forall k$, can be expressed by the following block matrices\vspace{-1mm}
		{\begin{align}\label{cc} 
			&\bm{F}_{A_k} =\text{BLkdiag}[\bm{p}_{k_1}, \bm{p}_{k_2} \cdots\bm{p}_{k_{N_{t_k}^{RF}}}  ],\nonumber\\
			&\bm{G}_{A_k} =\text{BLkdiag}[\bm{q}_{k_1}, \bm{q}_{k_2} \cdots\bm{q}_{k_{N_{r_k}^{RF}}} ],
			\end{align}}where the unit-modulus entries $\vert [\bm{p}_{i_k}]_{m_k}\vert\! =\!1$, $\forall i_k
		\! =\! 1,\cdots,N_{t_k}^{RF}$, $\forall m_k\! =\! 1,\cdots,N_{t_k}/N_{t_k}^{RF}$, and
		$\vert [\bm{q}_{j_k}]_{n_k}\vert\! =\! 1$, $\forall j_k\! =\! 1,\cdots, N_{r_k}^{RF}$,
		$\forall n_k\! =\! 1,\cdots,N_{r_k}/N_{r_k}^{RF}$, are imposed. Benefited from the 
		block diagonal structures of the analog precoder and combiner, the MM-based alternating
		optimization can be directly applied to the {WMMSE} problem \eqref{eq6} to obtain the
		locally optimal solution without requiring the approximation on analog precoder as in
		Section~\ref{MMALT}. More importantly, due to the sparsity of the partially-connected
		structure, the MM-based analog  {precoder  and combiner designs exhibit much lower
			complexity than that of Section~\ref{S3}.}
		
		\subsubsection{Semi closed-form digital precoder $\bm{F}_{D_k}^{\rm{Par}}$}
		
		Based on the partially-connected  structure \eqref{cc}, we can re-express
		$\widetilde{\bm{F}}_{A_k}$ and $\widetilde{\bm{F}}_{D_k}$ as $\widetilde{\bm{F}}_{A_k}
		\! =\! \sqrt{{N_{t_k}^{RF}}/{N_{t_k}}}\bm{F}_{A_k}$ and $\widetilde{\bm{F}}_{D_k}
		\! =\! \sqrt{{N_{t_k}}/{N_{t_k}^{RF}}}\bm{F}_{D_k}$, respectively, which are then
		substituted into \eqref{eq16} to obtain the semi closed-form digital precoder
		$\bm{F}_{D_k}^{\rm{Par}}$\vspace{-1mm}
		{
			\begin{align}\label{Par1} 
			\!\!\!\!\begin{array}{l}
			\bm{F}_{D_k}^{\rm{Par}} \!=\! \frac{1} {\sqrt{N_{t_k}}} \big(\sum\limits_{i=1}^K 
			\bm{L}_{i,k}^{\rm H} \bm{W}_i\bm{L}_{i,k} \!+\! \beta_k^{'}\bm{I}_{N_{t_k}^{RF}}\big)^{-1}
			\!\bm{L}_{k,k}^{\rm H}\bm{W}_k .
			\end{array}
			\end{align} where the determination of scalar $\beta_k^{'}$ is similar to  $\beta_k$.}
		\subsubsection{Semi closed-form analog precoder $\bm{F}_{A_k}^{\rm{Par}}$} 
		Given $\{\bm{F}_{D_k}^{\rm{Par}},$ $\forall k\}$, we firstly define the following
		auxiliary parameters for optimizing the partially-connected analog precoder  $\bm{F}_{A_k}^{\rm{Par}}$, which are\vspace{-1mm}
		\begin{align}\label{eq55}
		\left\{\!\!\!\! \begin{array}{l}
		\widehat{\bm{A}}_k\! =\!\!\frac{N_{t_k}^{RF}}{N_{t_k}}\! \left[\! \begin{array}{ccc}
		\widehat{\bm{A}}_k^{1,1} & \cdots & \widehat{\bm{A}}_k^{1,N_{t_k}^{RF}} \\
		\vdots & \ddots & \vdots \\ \widehat{\bm{A}}_k^{N_{t_k}^{RF},1} & \cdots & 
		\widehat{\bm{A}}_k^{N_{t_k}^{RF},N_{t_k}^{RF}} \end{array}\! \right]\!\! \in\!
		\mathbb{C}^{N_{t_k}\!\times\! N_{t_k}} , \\
		\widehat{\bm{A}}_k^{l,q}\! \!=\!\! \widetilde{\bm{A}}_k\Big(\widetilde{l}:\widetilde{l}\!
		+\! \frac{N_{t_k}}{N_{t_k}^{RF}}\! - \!1,\widetilde{q}:\tilde{q}\!\! +\!\!
		\frac{N_{t_k}}{N_{t_k}^{RF}}\!\! -\! \!1\Big) ,\\ 
		\bar{\bm{f}}_{A_k}\!\! =\!\!\big[\bm{p}_1^{\rm T}  \cdots
		\bm{p}_{N_{t_k}^{RF}}^{\rm T}\big]^{\rm T}\!\!\! \in\!\!\mathbb{C}^{N_{t_k} } , 
		\widehat{\bm{a}}_k\!\! =\!\! \big[\widehat{\bm{a}}_k^1 \cdots
		\widehat{\bm{a}}_k^{N_{t_k}^{RF}}\big]\!\! \in\! \mathbb{C}^{N_{t_k} } , \\
		\widehat{\bm{a}}_k^l\!\! =\!\! \sqrt{\frac{N_{t_k}^{RF}}{N_{t_k}}}
		\bm{a}_k\big(\widetilde{l}:\widetilde{l}\! + \!\frac{N_{t_k}}{N_{t_k}^{RF}}\! -\! 1\big) , 
		\widetilde{l}\! =\! (l\! -\! 1)\big(\frac{N_{t_k}}{N_{t_k}^{RF}}\!\! +\! N_{t_k}\big)\!\! +\!\!1,\\
		\widetilde{q}\! =\! (q\! -\! 1)\big(\frac{N_{t_k}}{N_{t_k}^{RF}}\! \!+\!\! N_{t_k}\big)\! +\! 1,~
		\forall l,q=1,\cdots,N_{t_k}^{RF} .
		\end{array} \right.
		\end{align}
		Following the similar derivations of \eqref{eq19}, the partially-connected analog
		precoder $\bm{F}_{A_k}^{\rm{Par}}$ for each transmit-receive pair is independently
		designed as
		\begin{align}\label{eqP2} 
		&\min\limits_{\bar{\bm{f}}_{A_k}}~ \bar{\bm{f}}_{A_k}^{\rm H}\widehat{\bm{A}}_k
		\bar{\bm{f}}_{A_k} - 2\Re\big\{\widehat{\bm{a}}_k^{\rm H}\bar{\bm{f}}_{A_k}\big\} , \nonumber\\
		&~
		{\rm{s.t.}} ~~ \vert[\bar{\bm{f}}_{A_k}]_n\vert=1, n=1,\cdots ,N_{t_k}.
		\end{align}
		
		Recalling Lemma~\ref{Le1}, the majorized counterpart of the problem \eqref{eqP2} at
		$\bar{\bm{f}}_{A_k}^{(l)}$ is formulated as
		\begin{align}\label{eq57}
		\min\limits_{\bar{\bm{f}}_{A_k}} ~ {\Re\big\{ \vec{\bm{a}}_k^{\rm H}
			\bar{\bm{f}}_{A_k}\big\} }, ~ \text{s.t} ~ \vert [ \bar{\bm{f}}_{A_k}]_n\vert\!=\!1,
		\forall n\!=\!1,\cdots, N_{t_k} ,
		\end{align}  
		where $\vec{\bm{a}}_k\! =\! \big(\widehat{\bm{A}}_k\! -\!
		\lambda_{\rm{\max}}(\widehat{\bm{A}}_k)\bm{I}_{N_{t_k}}\big)\widetilde{\bm{f}}_{A_k}^{(l)}
		\! -\! \widehat{\bm{a}}_k$, and the semi closed-form solution is obtained as \vspace{-1mm}
		\begin{align}\label{eqP5} 
		\bar{\bm{f}}_{A_k} =& \text{vec}(\bm{F}_{A_k}^{\rm{Par}}) = -e^{\textsf{j}\arg (\vec{\bm{a}}_k)} .
		\end{align}
		
		\subsubsection{Semi closed-form analog combiner $\bm{G}_{A_k}^{\rm{Par}}$}
		Similarly to solving \eqref{eq23}, by defining\vspace{-1mm}
		\begin{align}
		\left\{\!\!\begin{array}{l}
		\widehat{\bm{N}}_k\!\! =\!\!\! \left[\begin{array}{ccc} \widehat{\bm{N}}_k^{1,1} & \cdots & 
		\widehat{\bm{N}}_k^{1,N_{r_k}^{RF}} \\ \vdots & \ddots & \vdots \\ \widehat{\bm{N}}_k^{N_{r_k}^{RF},1}
		& \cdots & \widehat{\bm{N}}_k^{N_{r_k}^{RF},N_{r_k}^{RF}} \end{array}\right]\!\!\in\!
		\mathbb{C}^{N_{r_k}\times N_{r_k}} \\
		\widehat{\bm{N}}_k^{l,q}\!\! =\!\! \widetilde{\bm{N}}_k\big(\widetilde{l}:\widetilde{l}\! +\!
		\frac{N_{r_k}}{N_{r_k}^{RF}}\!\! -\!\! 1, \widetilde{q}:\widetilde{q}\! +\!
		\frac{N_{r_k}}{N_{r_k}^{RF}}\! - \!1\big) ,
		\end{array} \right.
		\notag 
		\end{align}
		\begin{align}\label{eq59}
		\left\{\!\!\!\!\begin{array}{l}
		\bar{\bm{g}}_{A_k}\!\! \!=\!\!\big[\bm{q}_1^{\rm T} ~\cdots
		\bm{q}_{N_{r_k}^{RF}}^{\rm T}\big]^{\rm T}\!\!\! \in\! \!\mathbb{C}^{N_{r_k}}, 
		\widehat{\bm{d}}_k\!\! =\!\!\big[\widehat{\bm{d}}_k^1\cdots \widehat{\bm{d}}_k^{N_{r_k}^{RF}}\big]\!\!
		\in\!\! \mathbb{C}^{N_{r_k}} ,\\
		\widehat{\bm{d}}_k^ {l}\!\!= \!\bm{d}_k \big(\widetilde{l}:\widetilde{l}+\frac{N_{r_k}}{N_{r_k}^{RF}}-1\big), 
		\widetilde{l}\! =\! (l\! \!-\!\! 1)\big(\frac{N_{r_k}}{N_{r_k}^{RF}}\! +\! N_{r_k}\big)\!\! +\!\! 1,\\
		\widetilde{q}\! =\! (q\! -\! 1)\big(\frac{N_{r_k}}{N_{r_k}^{RF}}\! +\! N_{r_k}\big)\! +\! 1,
		~\forall l,q\! =\! 1,\cdots,N_{r_k}^{RF} ,
		\end{array} \right.
		\end{align}
		the independent design of partially-connected analog combiner
		$\bm{G}_{A_k}^{\rm{Par}}$ for each transmit-receive pair can be formulated as
		\begin{align}\label{eqP7} 
		&\min\limits_{\bar{\bm{g}}_{A_k}}~ \bar{\bm{g}}_{A_k}^{\rm H}\widehat{\bm{N}}_k\bar{\bm{g}}_{A_k}
		- 2\Re\big\{\widehat{\bm{d}}_k{\rm ^H}\bar{\bm{g}}_{A_k}\big\} , \nonumber\\
		&~ {\rm{s.t.}} ~
		\vert [\bar{\bm{g}}_{A_k}]_n\vert=1, \forall n=1,\cdots, N_{r_k}.
		\end{align} 
		Also, the majorized counterpart of the problem \eqref{eqP7} at $\bar{\bm{g}}_{A_k}^{(l)}$
		can be expressed as \vspace{-1mm}
		\begin{align}\label{eqP8} 
		\min\limits_{\bar{\bm{g}}_{A_k}} ~ {\Re\big\{\vec{\bm{d}}_k^{\rm H}
			\bar{\bm{g}}_{A_k}\big\}}, ~ \text{s.t} ~ \vert [\bar{\bm{g}}_{A_k}]_n\vert=1,
		\forall n=1,\cdots, N_{r_k}.
		\end{align} 
		where $\vec{\bm{d}}_k\! =\! \big(\widehat{\bm{N}}_{k}\! -\! \lambda_{\max}(\widehat{\bm{N}}_k)
		\bm{I}_{N_{r_k}}\big)\bar{\bm{g}}_{A_k}^{(l)}\! -\! \widehat{\bm{d}}_k$, and the semi
		closed-form solution is derived as 
		\begin{align}\label{eqP9} 
		\bar{\bm{g}}_{A_k} =& \text{vec}\big(\bm{G}_{A_k}^{\rm{Par}}\big)
		= -e^{\textsf{j}\arg\big(\vec{\bm{d}}_k\big)}, ~ \forall k.
		\end{align}
		
		\subsubsection{Semi closed-form digital combiner $\bm{G}_{D_k}^{\rm{Par}}$ and weighting matrix $\bm{W}_{k}^{\rm{Par}}$}
		
		The optimal digital combiner $\bm{G}_{D_k}$ for the WMMSE problem
		\eqref{eq6} under this partially-connected structure is also Wiener filter, which
		has the same form as \eqref{eq10}. {Moreover}, the optimal weighing matrix
		$\bm{W}_{k}^{\rm{Par}}$ can be similarly derived as \eqref{eqw}. 
		
		Observing from {\eqref{eq57}} and \eqref{eqP8} that this partially-connected structure
		simplifies the analog precoder and combiner design due to the reduced number of
		optimization variables, { and also makes the  proposed MM-based alternating optimization directly applicable without the assumption in large-scale MIMO regime.}
		In a nutshell, the proposed MM-based hybrid design is well suited for this
		partially-connected structure.
		\vspace{-3mm}
		\section{Convergence of the Proposed Algorithms and Complexity Analysis}
		
		We firstly study the convergence of the proposed MM-based alternating optimization
		(MM-Alt-Opt). It is obvious that the objective function of the problem \eqref{eq8}
		is continuously differentiable and the constraint set is closed, bounded and
		separable in terms of optimization variables $\{\bm{G}_{A_k},\bm{F}_{A_k},
		\widetilde{\bm{F}}_{D_k}$, $\forall k\}$. In fact, the proposed MM-Alt-Opt for
		solving the problem \eqref{eq8} is a combination of the BCD and MM  methods, in
		which the uniquely optimal solutions of the blocks $\{\widetilde{\bm{F}}_{D_k},
		\forall k\}$, $\{\bm{G}_{D_k},\forall k\}$ and $\{\bm{W}_k,\forall k\}$ are
		available and the stationary solutions of the blocks $\{\bm{F}_{A_k},\forall k\}$
		and $\{\bm{G}_{A_k},\forall k\}$ are obtained using the MM method
		\cite{sun2017majorization}. Referring to \cite[Theorem~4.3]{jacobson2007expanded},
		since at least the stationary point for each block update is guaranteed, the
		proposed MM-Alt-Opt converges to a stationary point of the problem \eqref{eq8}.
		However, due to the adopted approximation on analog precoder, i.e.,
		$\widetilde{\bm{F}}_{A_k}\! \approx\! \frac{1}{\sqrt{N_{t_k}}}\bm{F}_{A_k}$, in
		\eqref{eq8}, this stationary point is actually a suboptimal solution to the
		original sum rate maximization problem \eqref{eq4}, but with an asymptotically
		optimal performance for large-scale MIMO regime according to Proposition~\ref{p1}.

		Next we analyze the computational complexity of the proposed MM-Alt-Opt, { PP-based  two-stage hybrid design (Hybrid PP-Two-Stage),} BD-ZF
		and SLNR-Max based hybrid designs {( Hybrid BD-ZF/SLNR-Max)}, in comparison with the classical OMP scheme
		\cite{el2014spatially}. To simplify the analysis, we consider that $N_t\! =\! N_{t_k}$,
		$N_r\! =\! N_{r_k}$ and $N_{RF}\! =\! N_{t_k}^{RF}\! =\! N_{r_k}^{RF}\! =\! N_{s_k}$,
		$\forall k$. In the OMP scheme, the length of codebooks for analog precoder (combiner)
		design is set to $L_{\rm c}$, and $N_t\! >\! N_r\! \gg \! L_{\rm c}\! >\! N_{RF}$ is
		assumed. We focus on the complexity of major computational steps, in which the low-order
		terms are omitted, and then the total complexity is added. 
		
		Let $I_W$ and $I_M$ be the numbers of outer and inner iterations, respectively, for
		the MM-based methods, including the MM-Alt-Opt and { Hybrid BD-ZF/SLNR-Max}. Observe
		from Algorithm~\ref{AL1} that in one outer iteration of the MM-Alt-Opt, the
		computational cost is mainly from the MM-based analog precoder design with the
		complexity on the order of $\textsf{O}(I_M N_t^2 N_{RF}^2)$ per transmit-receive
		pair. The total complexity of the MM-Alt-Opt is obviously linear w.r.t. the number of
		outer iterations $I_W$ and the number of communication pairs $K$. The similar analysis
		is  applicable to the partially-connected hybrid transceiver case {(Hybrid ParTxRx)}. While for the { Hybrid PP-Two-Stage},
		the complexity primarily comes from the selection of analog precoder and combiner
		based on the SVD of $N_r\! \times\! N_t$ channel matrix for each transmit-receive pair.
		The complexity of designing $\bm{F}_{D_k}$ and $\bm{G}_{D_k}$, which involves an
		iterative loop with $I_O$ iterations, is much smaller by comparison. This yields the
		total complexity of $\textsf{O}(K N_t^2 N_r)$. For the { Hybrid BD-ZF/SLNR-Max}, by defining $I_P$ as the number of iterations for the iterative-PP method,
		the hybrid precoder  design has the complexity $\textsf{O}(K I_P N_t^2 N_{RF})$,
		while the MM-based analog combiner design has the complexity
		$\textsf{O}(K I_W I_M N_r^2 N_{RF}^2)$. Hence the total complexity of this scheme is
		$\textsf{O}(K I_P N_t^2 N_{RF})$ {for the large $N_t$.} The OMP scheme involves an exhaustive search for
		both analog precoder and combiner from the predefined codebooks and large-scale
		matrix multiplication, yielding the  total complexity $\textsf{O}(K I_B N_t^3)$,
		where $I_B$ is the number of iterations for finding the WMMSE digital precoder and
		combiner.
		
		\vspace*{-2mm}
		\section{Simulation Results}~\label{SIM}
		\vspace*{-6mm}
		
		Unless otherwise stated, $K\! =\! 2$ transceiver pairs are used. Each transmitter
		deploys $N_t\! =\! 64$ antennas with $N_t^{RF}\! =\! 4$ RF chains to send $N_s\!
		=\! 4$ data streams to its receiver, which has $N_r\! =\! 16$ antennas and
		$N_t^{RF}\! =\! 4$ RF chains. The RF phase shifters with infinite resolution are
		assumed. Both the Rayleigh and mmWave channels are considered. For the normalized
		Rayleigh channel, the elements of all channel matrices are distributed according
		to $\mathcal{CN}(0,1)$. For the normalized mmWave channel, the propagation
		environment with $L_k\! =\! L\! =\! 10$ scatters, $\forall k$, is considered, in
		which the AOA and AOD of each path are uniformly distributed in $[0, ~ 2\pi ]$,
		while the pathloss factors $\alpha_k\! =\! \alpha$, $\forall k$, with $\alpha$
		obeying $\mathcal{CN}(0,1)$. By assuming the same transmit power $P_k\! =\! P$ and
		the same noise power $\sigma_{n_k}^2\! =\! \sigma_n^2$ at all transmitters and
		receivers, respectively, the received SNR becomes $\text{SNR}\! =\!
		\frac{P}{\sigma_n^2 }$. All the results are obtained by averaging over 100 channel
		realizations.
		
		In this work, we propose various hybrid transceiver designs, including the
		\textbf{MM-Alt-Opt}, the \textbf{Hybrid PP-Two-Stage}, the
		\textbf{Hybrid BD-ZF/SLNR-Max}, and the partially-connected hybrid structure of
		\textbf{Hybrid-ParTxRx}. In fact, there is another scheme which only considers
		the partially-connected hybrid structure at transmitter, and we call this scheme
		\textbf{Hybrid-ParTx}. The sum rate performance of these proposed designs are
		compared with that of the following baselines:
		
		\textbf{Hybrid OMP} \cite{el2014spatially}: The sparse reconstruction of the hybrid
		precoder and combiner of each transmit-receive pair is realized from the
		fully-digital precoder and MMSE combiner as well as predetermined codebook . The
		analog beamforming codebook used consists of the array steering vectors (the
		left/right singular vectors with phase mapping) of the desired mmWave (Rayleigh)
		channel. In particular, three baselines, called \textbf{Hybrid OMP-WMMSE},
		\textbf{Hybrid OMP-ZF} and \textbf{Hybrid OMP-SLNR}, are adopted according to
		three different fully-digital precoders based on the WMMSE, BD-ZF and SLNR-Max
		criteria, respectively.
		
		\textbf{Hybrid EGT-DFT Two-Stage} \cite{ni2016hybrid}: The EGT based analog precoder
		and DFT based analog combiner harvest the large array gain in the first analog stage,
		and the inter-user interference elimination is left to the second digital  stage.
		
		\textbf{Hybrid ParTx-SDR/Hybrid ParTxRx-SDR} \cite{yu2016alternating}: First the
		Euclidean distance between the partially-connected hybrid precoder and the
		fully-digital WMMSE precoder is minimized  in which the iterative procedure
		between the semidefinite relaxation (SDR) based digital precoder and the PP-based
		analog precoder is performed. Then the MM-based hybrid combiner designs under the
		fully connected and partially-connected receiver structures are performed,
		corresponding to \textbf{Hybrid ParTx-SDR} and \textbf{Hybrid ParTxRx-SDR},
		respectively.
		
		\textbf{Analog-only beamsteering} \cite{alkhateeb2015limited}: Only analog beamforming
		strategies at both ends are considered to align transmit and receive beams of each
		transceiver pair for maximizing array gain. The inter-user interference elimination
		is not involved.
		
		Moreover, the near-optimal fully-digital schemes based on the criteria of WMMSE,
		BD-ZF and SLNR-Max (\textbf{Fully-Digital-WMMSE}, \textbf{Fully-Digital-ZF} and 
		\textbf{Fully-Digital-SLNR}) are  adopted as the corresponding upper-bound 
		benchmarks.
		
		Fig.~\ref{fig20} compares the sum rate performance versus SNR in the mmWave
		channel achieved by the MM-Alt-Opt and Hybrid PP-Two-Stage with those of the
		three benchmarks, using the Fully-Digital-WMMSE as the upper bound. It can be
		seen from Fig.~\ref{fig20} that the sum rate of our MM-Alt-Opt is very close
		to the optimal Fully-Digital-WMMSE, confirming that it is near-optimal.
		Benefited from its iterative nature, the MM-Alt-Opt clearly outperforms the
		Hybrid PP-Two-Stage with one-shot approximation for analog precoder and
		combiner design. Also the Hybrid PP-Two-Stage achieves a similar
		performance to the Hybrid OMP-WMMSE at low SNR region, but slightly better
		performance at high SNR region. More importantly, the Hybrid PP-Two-Stage does
		not require the WMMSE fully-digital solution and has much lower-complexity than
		the Hybrid OMP-WMMSE. Since the inter-user interference elimination is not
		considered in the Analog-only beamsteering, its performance is the worst. In
		addition, when a larger number of scatters is considered, i.e., $L\! =\! 12$,
		the  MM-Alt-Opt still performs almost as good as the Fully-Digital-WMMSE, both
		having slightly higher sum rate compared to the case of $L\! =\! 10$.
			\begin{figure}[!tp]
				\vspace{-10mm}
				\begin{center}
					\includegraphics[width=.36\textwidth]{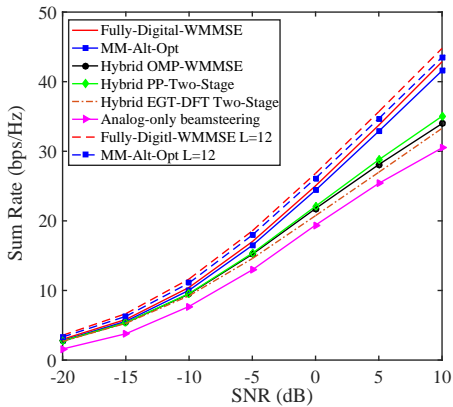}
				\end{center}
				\vspace{-5.5mm}
				\caption{Sum rate performance versus SNR in the mmWave channel achieved by the
					proposed MM-Alt-Opt and Hybrid PP-Two-Stage as well as the benchmarks Hybrid
					OMP-WMMSE, Hybrid EGT-DFT Two-Stage and Analog-only beamsteering, using the
					Fully-Digital-WMMSE as the  upper bound. The sum rates of the MM-Alt-Opt and
					Fully-Digital-WMMSE for the $L\! =\! 12$ scatters are also shown.}
				\label{fig20} 
			\end{figure}
			\begin{figure}[!tp]
				\begin{center}
					\includegraphics[width = .36\textwidth]{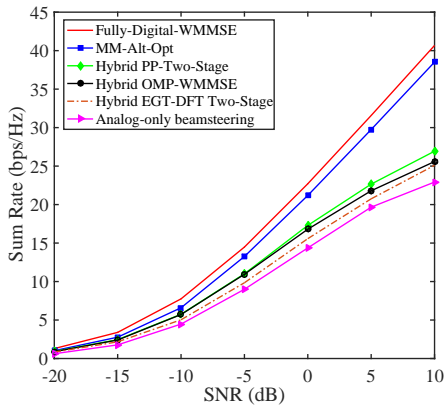}
				\end{center}
				\vspace{-5mm}
				\caption{Sum rate performance versus SNR in the Rayleigh channel achieved by the
					proposed MM-Alt-Opt and Hybrid PP-Two-Stage as well as the benchmarks Hybrid
					OMP-WMMSE, Hybrid EGT-DFT Two-Stage and Analog-only beamsteering, using the
					Fully-Digital-WMMSE as the  upper bound.}
				\label{fig21} 
					\vspace{-2mm}
			\end{figure}
		Next, we carry the same comparison in the Rayleigh scenario, and the results are
		shown in Fig.~\ref{fig21}. Observe that the sum rate gap between the optimal
		Fully-Digital-WMMSE and the MM-Alt-Opt is larger than in the mmWave channel. The
		reason is that the approximation $\bm{F}_{A_k}^{\rm H}\bm{F}_{A_k}\! \!\approx\!\!
		N_t\bm{I}_{N_t^{RF}}$ adopted in the MM-Alt-Opt is less accurate in the Rayleigh
		case.

		\begin{figure}[tp]
			\vspace{-7mm}
			\begin{center}
				\includegraphics[width = .5\textwidth]{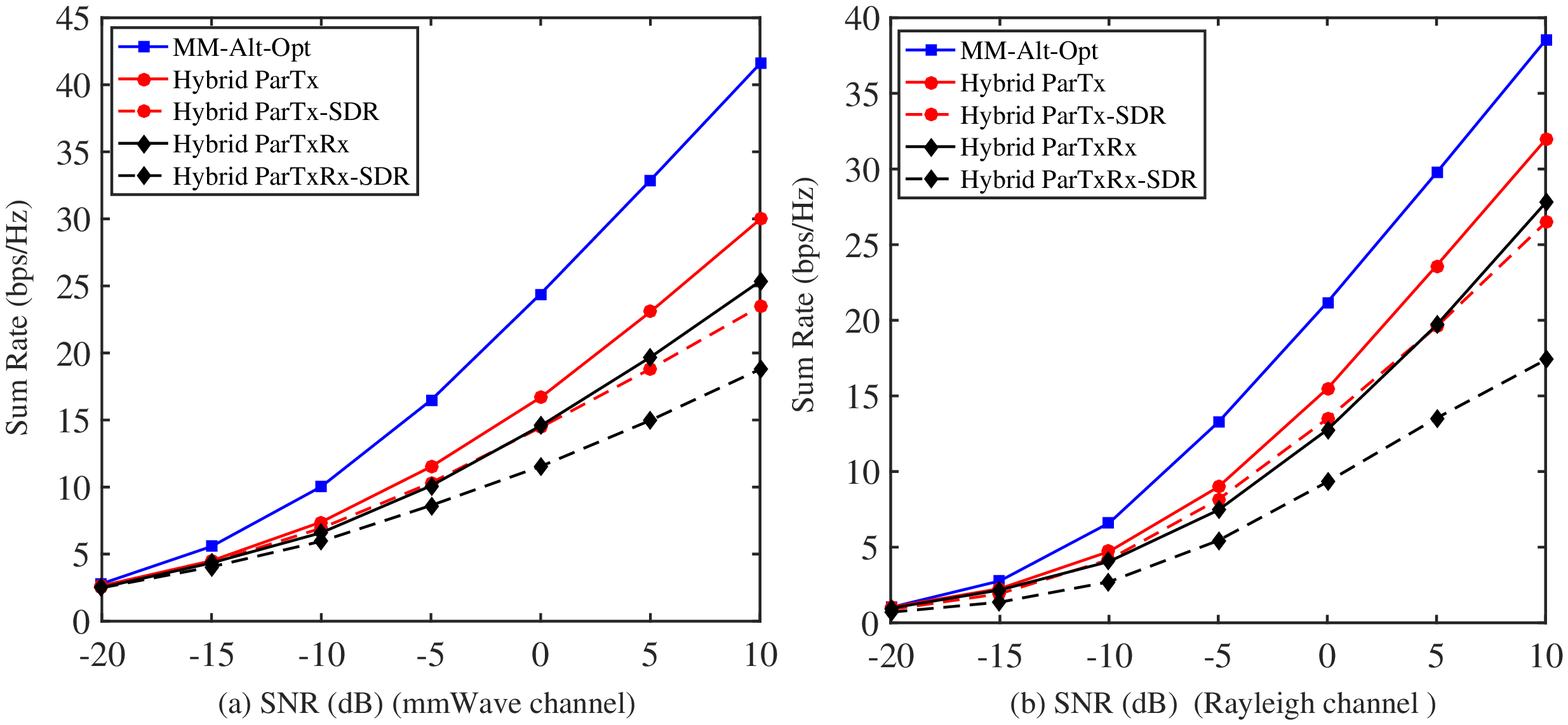}
			\end{center}
			\vspace{-5mm}
			\caption{Sum rate performance versus SNR in (a)~the mmWave channel and (b)~the
				Rayleigh channel, achieved by the proposed MM-Alt-Opt, Hybrid-ParTx and
				Hybrid-ParTxRx as well as the benchmarks Hybrid ParTx-SDR and Hybrid ParTxRx-SDR.}
			\label{fig22} 
		\end{figure}
		
		From Fig.~\ref{fig22}, it can be seen that the MM-Alt-Opt considerably outperforms
		the Hybrid-ParTx in both the mmWave and Rayleigh cases, since  the inter-user
		interference cannot be effectively suppressed by the Hybrid-ParTx with its much
		reduced design freedom in analog precoder. Similarly, the Hybrid-ParTx has better
		sum rate performance than the Hybrid-ParTxRx, since the latter has the further
		much reduced design freedom in analog combiner. Also, observe from Fig.~\ref{fig22}
		that the proposed Hybrid-ParTx outperforms its corresponding benchmark Hybrid
		ParTx-SDR, while the  Hybrid-ParTxRx outperforms its related baseline
		Hybrid ParTxRx-SDR.
		\begin{figure}[tp!]
			\begin{center}
				\includegraphics[width = .36\textwidth]{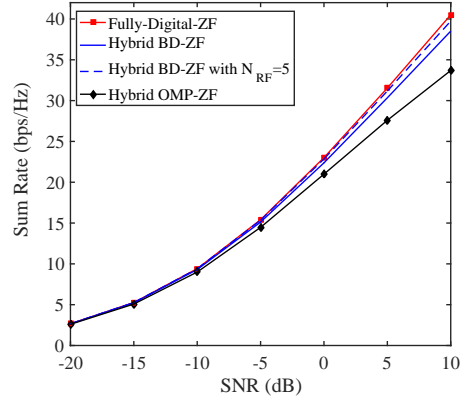}
			\end{center}
			\vspace{-5mm}
			\caption{Sum rate performance versus SNR in the mmWave channel achieved by the
				proposed Hybrid BD-ZF and the benchmark Hybrid OMP-ZF, using the Fully-Digital-ZF
				as the upper bound. The sum rate of the Hybrid BD-ZF with $N_t^{RF}\! =\! N_r^{RF}
				\! = \! N_{RF}\! =\! 5$ is also shown.}
			\label{fig23} 
			\vspace{-2mm}
		\end{figure}
		
		Fig.~\ref{fig23} compares the sum rate performance of the proposed Hybrid BD-ZF and 
		the baseline Hybrid OMP-ZF, using the Fully-Digital-ZF solution as the upper bound.
		Observe that the sum rate of the Hybrid BD-ZF is close to that of the full-digital
		BD-ZF solution, especially when one extra RF, i.e., $N_{RF}=5$, is considered.
		Moreover, the proposed hybrid BD-ZF clearly achieves higher sum rate than the
		hybrid OMP-ZF baseline, because its iterative nature enables the hybrid precoder
		to better approximate the fully-digital solution at the expense of higher
		computational complexity. Furthermore, Fig.~\ref{fig24} shows the sum rates achieved
		by the proposed Hybrid SLNR-Max and the baseline Hybrid-OMP-SLNR versus SNR in the 
		mmWave channel, using the Fully-Digital-SLNR as the upper bound. Clearly,
		Fig.~\ref{fig24} presents similar comparison results among the three schemes to
		Fig.~\ref{fig23}.
		
		Fig.~\ref{fig25} compares the sum rate performance versus the number of RF chains
		$N_{RF}$ in the mmWave channel achieved by the proposed MM-Alt-Opt and hybrid-ParTx
		as well as the optimal Fully-Digital-WMMSE. For the hybrid-ParTx, each of the first
		$N_t^{RF}\!-\!1$ RF chains is connected with $\big\lfloor\frac{N_t}{N_t^{RF}}
		\big\rfloor$ transmit antennas, while the last RF chain is connected with $N_t\! -
		\! (N_t^{RF}\! - \!1)\big\lfloor\frac{N_t}{N_t^{RF}}\big\rfloor$ antennas. It has
		been shown in \cite{sohrabi2016hybrid} that when $N_{RF}\! \ge\! 2N_s$, there exists
		a globally optimal hybrid precoder and combiner design, which perfectly reconstructs
		the fully-digital precoder and combiner, yielding the same sum rate  performance.
		Observe from Fig.~\ref{fig25} that almost identical performance are attained by both
		the Fully-Digital-WMMSE and the MM-Alt-Opt when $N_{RF}\!\ge\! 2N_s\! =\! 8$.
		Obviously, the Hybrid-ParTx cannot perfectly reconstruct the fully-digital design due
		to the reduced design freedom of analog precoder, and the achievable sum rate of the
		Hybrid-ParTx increases with $N_{RF}$ mainly owing to the increased design freedom of
		digital precoder. 
		\begin{figure}[tp!]
			\vspace{-11mm}
			\begin{center}
				\includegraphics[width = .36\textwidth]{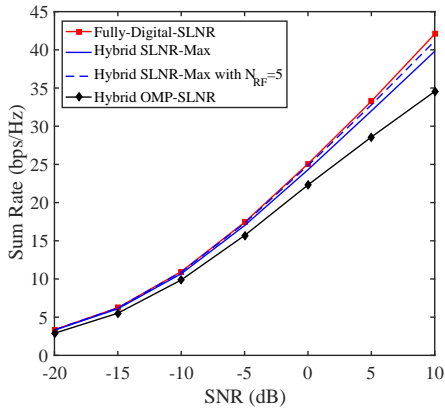}
			\end{center}
			\vspace{-5mm}
			\caption{Sum rate performance versus SNR in the mmWave channel achieved by the
				proposed Hybrid SLNR-Max and the benchmark Hybrid OMP-SLNR, using the
				Fully-Digital-SLNR as the upper bound. The sum rate of the Hybrid SLNR-Max
				with $N_t^{RF}\! =\! N_r^{RF} \! = \! N_{RF}\! =\! 5$ is also shown.}
			\label{fig24} 
		\end{figure}
		\begin{figure}[tp!]
			\begin{center}
				\includegraphics[width =0.36\textwidth]{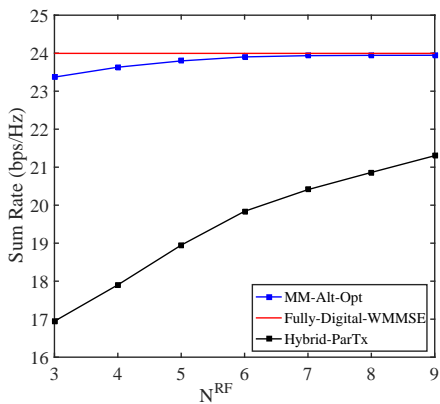}
			\end{center}
			\vspace{-5mm}
			\caption{Sum rate performance versus number of transmit/receive RF chains $N_t^{RF}
				\! =\! N_r^{RF}\! =\! N^{RF}$ in the mmWave channel achieved by the proposed
				MM-Alt-Opt and Hybrid-ParTx, in comparison with the Fully-Digital-WMMSE, given
				$\text{SNR}\! =\! 0$\,dB.}
			\label{fig25} 
			\vspace{-2mm}
		\end{figure}
		\begin{figure}[t]
			\vspace{-9mm}
			\begin{center}
				\includegraphics[width = .52\textwidth]{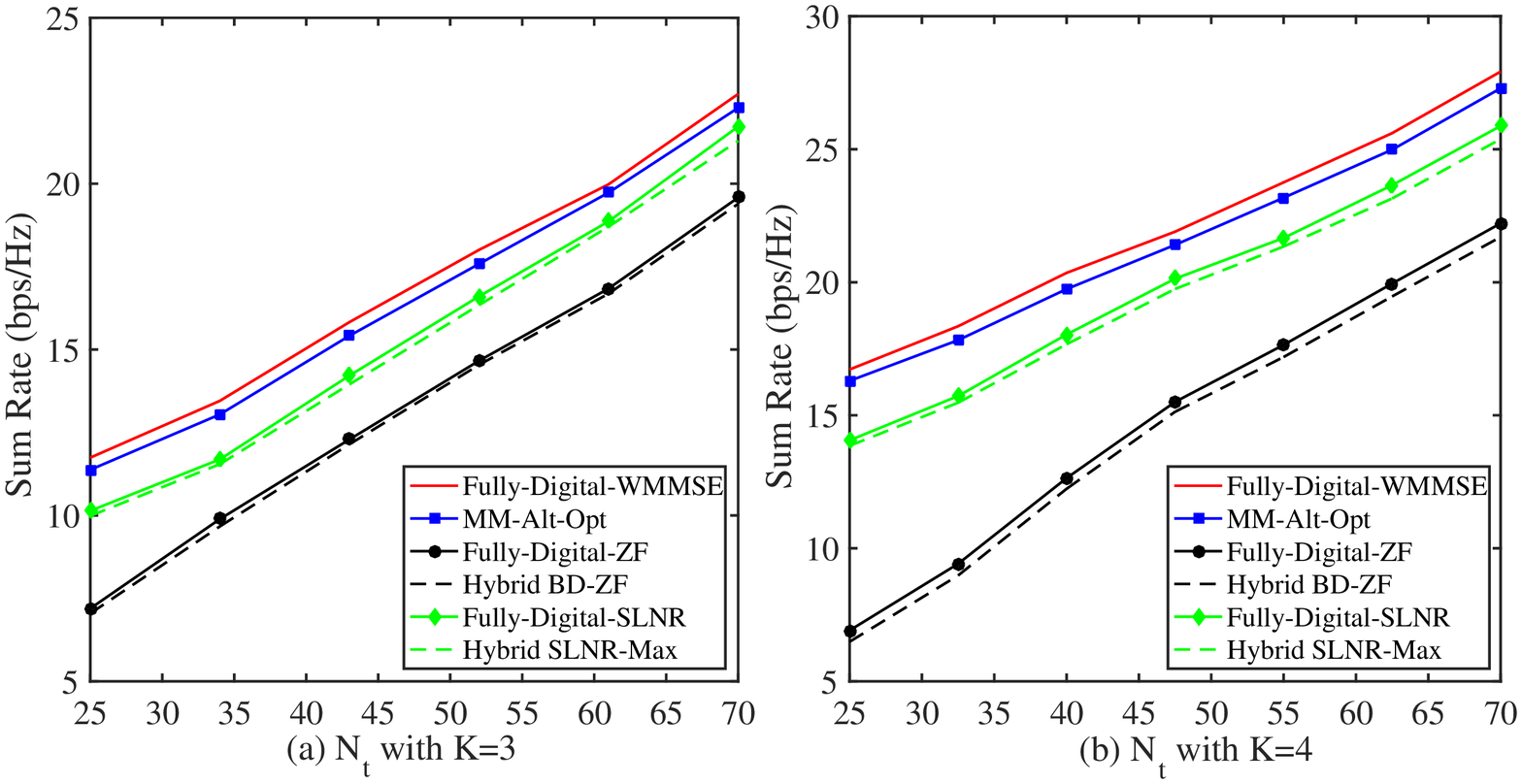}
			\end{center}
			\vspace{-5mm}
			\caption{Sum rate performance versus the number of transmit antennas $N_t$ in the
				mmWave channel achieved by the proposed MM-Alt-Opt, Hybrid BD-ZF and Hybrid
				SLNR-Max, in comparison with the corresponding optimal Fully-Digital-WMMSE,
				Fully-Digital-ZF and Fully-Digital-SLNR: (a)~$K\! =\! 3$ users, and (b)~$K\! =\! 4$
				users.}
			\label{fig26} 
			\vspace{-2mm}
		\end{figure}
		
		Finally, Fig.~\ref{fig26} depicts the sum rates as functions of the number of
		transmit antennas $N_t$ in the mmWave channel achieved by the proposed MM-Alt-Opt,
		Hybrid BD-ZF and Hybrid SLNR-Max, in comparison with their corresponding optimal
		Fully-Digital-WMMSE, Fully-Digital-ZF and Fully-Digital-SLNR designs, respectively.
		The results of Fig.~\ref{fig26} confirm that the MM-Alt-Opt, Hybrid BD-ZF and
		Hybrid SLNR-Max designs are all closed to their respective  fully-digital
		solutions. Obviously, the achievable sum rates of all the schemes increase with
		$N_t$ owing to the increased spatial degrees of freedom. Also, the MM-Alt-Opt
		outperforms the Hybrid SLNR-Max, while the Hybrid SLNR-Max has better sum rate
		than the Hybrid BD-ZF.
		\vspace{-3mm}
		\section{Conclusions}\label{S8}
		
		This paper has investigated various hybrid transceiver designs for sum rate
		maximization in both the mmWave and Rayleigh $K$-user MIMO interference channels.
		First,  bypassing the near-optimal WMMSE fully-digital solution, we have jointly
		designed hybrid precoder and combiner in an alternating manner, in which the
		MM method is used to design the analog precoder and combiner. Moreover, a
		PP-based two-stage scheme has been proposed to decouple the design of analog
		and digital precoder (combiner), leading to lower complexity. Second, with the
		aid of the easy-to-implement fully-digital precoder, the low-complexity BD-ZF and
		SLNR-Max hybrid schemes have been studied, which focus on approximating the hybrid
		precoders to the fully-digital solutions derived according to the BD-ZF and SLNR
		criteria, respectively. Third, the partially-connected transceiver structure has
		been considered to reduce the system hardware cost and complexity, to which the
		MM-based alternating optimization is applicable. Numerical results have
		demonstrated the effectiveness of all our proposed hybrid transceiver designs, and
		they have shown that the sum rate performance of all our proposed hybrid designs
		are close or superior to those of the existing benchmarks. Our future research
		will study all the proposed hybrid  designs implemented with finite resolution
		phase shifters and/or with limited channel feedback.
		\vspace{-3mm}
		\appendix
		
		\subsection{Proof of {\bf Proposition~\ref{p1}}}\label{APa}
		
		\begin{proof}
			The detailed proof of the equality \eqref{Ho1} has been given in \cite{wu2018hybrid}.
			We directly use it to validate the equality \eqref{Ho2}. Firstly, recalling
			$\bm{H}_{i,k}\! =\! \bm{U}_{i,k}\bm{\Lambda}_{i,k}\bm{V}_{i,k}^{\rm H}$, $\forall i,k$,
			we have\vspace{-1mm}
			\begin{align}\label{Ho3} 
			\lim_{N_{t_k}\!\to\! +\infty}\!\! \bm{U}_{i,k}\widetilde{\bm{\Lambda}}_{i,k}\bm{V}_{i,k}^{\rm H}
			\bm{V}_{k,k}\widetilde{\bm{\Lambda}}_{k,k}\bm{U}_{k,k}^{\rm H} \!=\! \bm{0}_{N_{r_i}\times N_{r_k}},
			\forall  i\! \neq \!k ,
			\end{align}
			where $\widetilde{\bm{\Lambda}}_{i,k}\!\! =\!\! \frac{1}{\sqrt{N_{t_k}}}\bm{\Lambda}_{i,k}$.
			Since $\bm{U}_{i,k}$, $\forall i,k$, are unitary, we have
			\begin{align}\label{Ho4} 
			\lim_{N_{t_k} \to +\infty} \widetilde{\bm{\Lambda}}_{i,k}\bm{V}_{i,k}^{\rm H}\bm{V}_{k,k}
			\widetilde{\bm{\Lambda}}_{k,k} =& \bm{0}_{N_{r_i}\times N_{r_k}}, ~ \forall i \neq k .
			\end{align}
			Let $a_{p,q}$ be the $(p,q)$th element of the matrix $\bm{V}_{i,k}^{\rm H}
			(1:\text{rank}(\bm{H}_{i,k}),:) \bm{V}_{k,k}(:,1:\!\text{rank}(\bm{H}_{k,k}))$ with
			$p\! =\! 1\cdots\text{rank}(\bm{H}_{i,k})$ and $q\!=\!1\cdots\text{rank}(\bm{H}_{k,k}\!)$.
			Then, the $(p,q)$th element of the matrix $\widetilde{\bm{\Lambda}}_{i,k}{\bm{V}}_{i,k}^{\rm H}
			\bm{V}_{k,k}\widetilde{\bm{\Lambda}}_{k,k}$ can be expressed as $a_{p,q}
			\big[\widetilde{\bm{\Lambda}}_{i,k}\big]_{p,p}\big[\widetilde{\bm{\Lambda}}_{k,k}\big]_{q,q}$.
			Since the singular values $\big[\widetilde{\bm{\Lambda}}_{i,k}\big]_{l,l}$, $\forall i,k
			\! =\! 1,\cdots, K$, are nonzero when $l\! \le\! \text{rank}(\bm{H}_{i,k})$, we readily
			conclude that the equality \eqref{Ho4} holds if and only if $a_{p,q}\! =\! 0$, $\forall p,q$,
			which leads to \eqref{Ho2}. This completes the proof.
		\end{proof}
		\vspace{-5mm}
		\subsection{Proof of {\bf Proposition~\ref{Le2}}}\label{APb}
		\begin{proof}
			Recalling the mmWave channel model \eqref{eq5},  we have
			\begin{align}\label{eqL1} 
			\bm{H}_{k,i} =& \sqrt{\frac{N_{r_k} N_{t_i}}{L_{k,i}}} \bm{A}_r^k\bm{\Lambda}_{k,i}
			\big(\bm{A}_t^i\big)^{\rm H}, ~ \forall k, i=1,\cdots, K ,
			\end{align}
			where $\bm{A}_r^k\! =\! \big[ \bm{a}_r(\theta_k^1),\cdots ,\bm{a}_r(\theta_k^{L_{k,i}})\big]
			\! \in\! \mathbb{C}^{N_{r_k}\times L_{k,i}}$, $\bm{A}_t^i\! =\! \big[\bm{a}_t(\psi_i^1),\cdots ,
			\bm{a}_t(\psi_i^{L_{k,i}})\big]\! \in\! \mathbb{C}^{N_{t_i}\times L_{k,i}}$ and
			$\bm{\Lambda}_{k,i}\! =\! \text{diag}\big[\alpha_k^1,\cdots ,\alpha_k^{ L_{k,i}}\big]$. Note
			that $N_{s_k}\! \le\! N_{t_k}^{RF}\! \le\! L_{k,k}$ and $L_{k,i}=\text{rank}(\bm{H}_{k,i})$,$\forall i,k,$ are implied. Referring
			to \cite{el2014spatially}, when $N_{t_k}\to +\infty$, the array steering vectors
			$\bm{a}_t^{\rm H}(\psi_i^l)$, $\forall l$, are linearly independent and asymptotically
			orthogonal with probability one, i.e., $\lim\nolimits_{N_{t_k}\to+\infty}
			\bm{a}_t^{\rm H}\big(\psi_i^{l_1}\big)\bm{a}_t\big(\psi_i^{l_2}\big)\! =\! 0$,
			$\forall l_1\! \neq\! l_2$, and $\lim\nolimits_{N_{t_k} \to +\infty}\big(\bm{A}_t^k\big)^{\rm H}
			\bm{A}_t^k\! =\! \bm{I}_{N_{t_k}}$, $\forall k$, which implies that in large-scale mmWave MIMO
			regime, the array response matrix $\bm{A}_t^k$ { can be approximated to} the right singular matrix of
			$\bm{H}_{i,k}$. Furthermore, by recalling \eqref{ZF1} and  exploiting the equality
			\eqref{Ho2}, the fully-digital BD-ZF precoder $\bm{F}_k^{\rm{ZF}}$ can be re-expressed as\vspace{-1mm}
			\begin{align}\label{eq66}
			\lim_{N_{t_k}\! \to +\infty} \!\bm{F}_k^{\rm{ZF}}
			& \!=\! \bm{V}_{k,k}(:,1:L_{k,k})
			\sqrt{\overline{\bm{\Lambda}}_k}\!=\! \bm{A}_t^k \sqrt{\overline{\bm{\Lambda}}_k} ,
			\end{align}
			where {$\overline{\bm{\Lambda}}_{k}\! =\! \text{BLKdiag}\big[ \bm{\Lambda}_k, 
				\bm{0}_{L_{k,k}-N_{s_k},L_{k,k}-N_{s_k}}\big]$} and $\bm{\Lambda}_k$ is determined
			by solving the problem \eqref{ZF2}. Obviously, the matrix $\bm{A}_t^k$ with unit-modulus
			elements can be realized by RF phase shifters, so that the proposed iterative-PP
			analog precoder $\bm{F}_{A_k}$ in \eqref{ZFp2} is easily obtained as $\bm{F}_{A_k}^{\infty}
			\! =\! \bm{A}_t^k(:,1:N_{t_k}^{RF})$ when $N_{t_k}\! \to\! +\infty$. Correspondingly, the
			optimal digital precoder is readily derived as $\bm{F}_{D_k}^{\infty}\! =\! \big[
			\sqrt{\bm{\Lambda}}_k ~ \bm{0}_{N_{s_k}\times (N_{t_k}^{RF}\!-N_{s_k})}\big]^{\rm H}$.
			Using the above hybrid precoder design of the $k$th transmitter, the resultant interference
			at the $i$th receiver, where $i\! \neq\! k$, satisfies  \vspace{-1mm}
			\begin{align}\label{eq67}
			&\lim_{N_{t_k}\to +\infty}\!\!  \bm{H}_{i,k}\bm{F}_{A_k}\bm{F}_{D_k} =
			\! \lim_{N_{t_k}\to +\infty}\!\! \bm{H}_{i,k}\bm{F}_{A_k}^{\infty}\bm{F}_{D_k}^{\infty}
			\nonumber\\
			&\!= \! \!\lim_{N_{t_k}\to \!+\!\infty} \!\!\bm{U}_{i,k}\bm{\Lambda}_{i,k} \bm{V}_{i,k}^{\rm H}(1:{L}_{i,k},:)
			\bm{A}_t^k(:,1:N_{s_k}) \sqrt{\bm{\Lambda}}_k,  \\
			& \!=\!\!\lim_{N_{t_k}\to +\infty}\!\!\!\! \bm{U}_{i,k}\bm{\Lambda}_{i,k}\bm{V}_{i,k}^{\rm H}(1:{L}_{i,k},:)
			\bm{V}_{k,k}(:,1:N_{s_k}) \sqrt{\bm{\Lambda}}_k \!= \!\bm{0}.\nonumber
			\end{align} where the last equality holds by recalling  \eqref{Ho2}. This completes the proof.
		\end{proof}
		\vspace{-4mm}
		\bibliographystyle{IEEEtran}

\begin{thebibliography}{10}
			\bibitem{lu2014overview} 
			L.~Lu, \emph{et~al.}, ``An overview of massive MIMO: Benefits and challenges,''
			\emph{IEEE J. Sel. Topics Signal Process.}, vol.~8, no.~5, pp.~742--758,
			Oct. 2014.
			\bibitem{rusek2012scaling} 
			F.~{Rusek}, \emph{et~al.}, ``Scaling up MIMO: Opportunities and challenges with
			very large arrays,'' \emph{IEEE Signal Process. Mag.}, vol.~30, no.~1, pp.~40--60,
			Jan. 2013.
			
			\bibitem{roh2014millimeter} 
			W.~Roh, \emph{et~al.}, ``Millimeter-wave beamforming as an enabling technology
			for 5G cellular communications: Theoretical feasibility and prototype results,''
			\emph{IEEE Commun. Mag.}, vol.~52, no.~2, pp.~106--113, Feb. 2014.
			\bibitem{hoydis2013massive} 
			J.~Hoydis, S.~Ten~Brink, and M.~Debbah, ``Massive MIMO in the UL/DL of cellular
			networks: How many antennas do we need?'' \emph{IEEE J. Sel. Areas Commun.},
			vol.~31, no.~2, pp.~160--171, Feb. 2013.
			
			\bibitem{molisch2017hybrid} 
			A.~F. Molisch, \emph{et~al.}, ``Hybrid beamforming for massive {MIMO}: A survey,''
			\emph{IEEE Commun. Mag.}, vol.~55, no.~9, pp.~134--141, Sep. 2017.
			\bibitem{el2014spatially} 
			O.~El~Ayach, \emph{et~al.}, ``Spatially sparse precoding in millimeter wave MIMO
			systems,'' \emph{IEEE Trans. Wireless Commun.}, vol.~13, no.~3, pp.~1499--1513,
			Mar. 2014.
			
			\bibitem{rebeiz2002rf} 
			G.~M.~Rebeiz, G.~L.~Tan, and J.~S.~Hayden, ``RF MEMS phase shifters: Design and
			applications,'' \emph{IEEE Microw. Mag.}, vol.~3, no.~2, pp.~72--81, Jun. 2002.
			
			\bibitem{mendez2016hybrid} 
			R.~M\'endez~Rial, \emph{et~al.}, ``Hybrid MIMO architectures for millimeter wave
			communications: Phase shifters or switches?'' \emph{IEEE Access}, vol.~4,
			pp.~247--267, Jan. 2016.
			
			\bibitem{zeng2014electromagnetic} 
			Y.~Zeng, R.~Zhang, and Z.~N.~Chen, ``Electromagnetic lens-focusing antenna
			enabled massive MIMO: Performance improvement and cost reduction,'' \emph{IEEE J.
				Sel. Areas Commun.}, vol.~32, no.~6, pp.~1194--1206, Jun. 2014.
			
			\bibitem{liang2014low} 
			L.~Liang, W.~Xu, and X.~Dong, ``Low-complexity hybrid precoding in massive
			multiuser MIMO systems,'' \emph{IEEE Wireless Commun. Lett.}, vol.~3, no.~6,
			pp.~653--656, Dec. 2014.
			\bibitem{sohrabi2016hybrid} 
			F.~Sohrabi and W.~Yu, ``Hybrid digital and analog beamforming design for
			large-scale antenna arrays,'' \emph{IEEE J. Sel. Topics Signal Process.}, vol.~10,
			no.~3, pp.~501--513, Apr. 2016.
			\bibitem{sohrabi2017hybrid} 
			F.~Sohrabi and W.~Yu, ``Hybrid analog and digital beamforming for mmwave OFDM
			large-scale antenna arrays,'' \emph{IEEE J. Sel. Areas Commun.}, vol.~35, no.~7,
			pp.~1432--1443, Jul. 2017.
			\bibitem{singh2015feasibility} 
			J.~Singh and S.~Ramakrishna, ``On the feasibility of codebook-based beamforming
			in millimeter wave systems with multiple antenna arrays,'' \emph{IEEE Trans.
				Wireless Commun.}, vol.~14, no.~5, pp.~2670--2683, May 2015.
			\bibitem{Liu_etal2018} 
			W.~Liu, \emph{et al.}, ``Partially-activated conjugate beamforming for LoS massive
			MIMO communications,'' \emph{IEEE Access}, vol.~6, pp.~56504--56513, Oct. 2018. 
			\bibitem{ni2016hybrid} 
			W.~Ni and X.~Dong, ``Hybrid block diagonalization for massive multiuser MIMO
			systems,'' \emph{IEEE Trans. Commun.}, vol.~64, no.~1, pp. 201--211, Jan. 2016.
			\bibitem{Zhou_etal2018} 
			Z.~Zhou, N.~Ge, Z.~Wang, and S.~Chen, ``Hardware-efficient hybrid precoding for
			millimeter wave systems with multi-feed reflectarrays,'' \emph{IEEE Access},
			vol.~6, pp.~6795--6806, Mar. 2018.
			\bibitem{Xing2019hybrid} 
			C.~Xing, \emph{et al.}, ``A framework on hybrid MIMO transceiver design based
			on matrix-monotonic optimization,'' \emph{IEEE Trans. Signal Process.}, vol.~67,
			no.~13, pp.~3531--3546, Jul. 2019.
			
			\bibitem{yu2016alternating} 
			X.~Yu, J.~C.~Shen, J.~Zhang, and K.~B.~Letaief, ``Alternating minimization
			algorithms for hybrid precoding in millimeter wave MIMO systems,'' \emph{IEEE J.
				Sel. Topics Signal Process.}, vol.~10, no.~3, pp.~485--500, Apr. 2016.
			\bibitem{ni2017near} 
			W.~Ni, X.~Dong, and W.~S. Lu, ``Near-optimal hybrid processing for massive MIMO
			systems via matrix decomposition,'' \emph{IEEE Trans. Signal Process.}, vol.~65,
			no.~15, pp.~3922--3933, Aug. 2017.
			\bibitem{chen2015iterative} 
			C.~E. Chen, ``An iterative hybrid transceiver design algorithm for millimeter
			wave MIMO systems,'' \emph{IEEE Wireless Commun. Lett.}, vol.~4, no.~3, pp.~285--288,
			Jun. 2015.
			
			\bibitem{kim2015mse} 
			M.~Kim and Y.~H. Lee, ``MSE-based hybrid RF/baseband processing for millimeter-wave
			communication systems in MIMO interference channels,'' \emph{IEEE Trans. Veh.
				Techno.}, vol.~64, no.~6, pp.~2714--2720, Jun. 2015.
			\bibitem{nguyen2017hybrid} 
			D.~H.~Nguyen, L.~B.~Le, T.~Le-Ngoc, and R.~W.~Heath, ``Hybrid MMSE precoding
			and combining designs for mmwave multiuser systems,'' \emph{IEEE Access}, vol.~5,
			pp.~19167--19181, Sep. 2017.
			\bibitem{rajashekar2017iterative} 
			R.~Rajashekar and L.~Hanzo, ``Iterative matrix decomposition aided block
			diagonalization for mm-wave multiuser MIMO systems,'' \emph{IEEE Trans.
				Wireless Commun.}, vol.~16, no.~3, pp. 1372--1384, Mar. 2017.
			
			\bibitem{alkhateeb2015limited} 
			A.~Alkhateeb, G.~Leus, and R.~W.~Heath, ``Limited feedback hybrid precoding for
			multi-user millimeter wave systems,'' \emph{IEEE Trans. Wireless Commun.},
			vol.~14, no.~11, pp.~6481--6494, Nov. 2015.
			\bibitem{wu2018hybrid} 
			X.~Wu, D.~Liu, and F.~Yin, ``Hybrid beamforming for multi-user massive MIMO
			systems,'' \emph{IEEE Trans. Commun.}, vol.~66, no.~9, pp. 3879--3891,
			Sep. 2018.
			
			\bibitem{spencer2004zero} 
			Q.~H.~Spencer, A.~L.~Swindlehurst, and M.~Haardt, ``Zero-forcing methods for
			downlink spatial multiplexing in multiuser MIMO channels,'' \emph{IEEE Trans.
				Signal Process.}, vol.~52, no.~2, pp.~461--471, Feb. 2004.
			
			\bibitem{sadek2007leakage} 
			M.~Sadek, A.~Tarighat, and A.~H.~Sayed, ``A leakage-based precoding scheme
			for downlink multi-user MIMO channels,'' \emph{IEEE Trans. Wireless Commun.},
			vol.~6, no.~5, pp.~1711--1721, May 2007.
			
			\bibitem{cheng2010new} 
			P.~Cheng, M.~Tao, and W.~Zhang, ``A new SLNR-based linear precoding for
			downlink multi-user multi-stream MIMO systems,'' \emph{IEEE Commun. Lett.},
			vol.~14, no.~11, pp.~1008--1010, Nov. 2010.
			
			\bibitem{liu2014phase} 
			A.~Liu and V.~Lau, ``Phase only RF precoding for massive MIMO systems with
			limited RF chains,'' \emph{IEEE Trans. Signal Process.}, vol.~62, no.~17,
			pp.~4505--4515, Sep. 2014.
			\bibitem{liu2015two} 
			A.~Liu and V.~K.~Lau, ``Two-stage subspace constrained precoding in massive
			MIMO cellular systems,'' \emph{IEEE Trans. Wireless Commun.}, vol.~14, no.~6,
			pp.~3271--3279, Jun. 2015.
			
			\bibitem{sun2017majorization} 
			Y.~Sun, P.~Babu, and D.~P.~Palomar, ``Majorization-minimization algorithms in
			signal processing, communications, and machine learning,'' \emph{IEEE Trans.
				Signal Process.}, vol.~65, no.~3, pp.~794--816, Feb. 2017.
			\bibitem{wu2018transmit} 
			L.~Wu, P.~Babu, and D.~P.~Palomar, ``Transmit waveform/receive filter design
			for MIMO radar with multiple waveform constraints,'' \emph{IEEE Trans. Signal
				Process.}, vol.~66, no.~6, pp.~1526--1540, Mar. 2018.
			\bibitem{Boulingand} 
		J. Pang,``Partially B-regular optimization and equilibrium problems,'' \emph{
		Math. Oper. Res.}, vol.~32, no.~3, pp.~687--699, 2007.
			\bibitem{shi2011iteratively} 
			Q.~Shi, M.~Razaviyayn, Z.~Q.~Luo, and C.~He, ``An iteratively weighted MMSE
			approach to distributed sum-utility maximization for a MIMO interfering broadcast
			channel,'' \emph{IEEE Trans. Signal Process.}, vol.~59, no.~9, pp.~4331--4340,
			Sep. 2011.
			
			\bibitem{wallace2002modeling} 
			J.~W.~Wallace and M.~A.~Jensen, ``Modeling the indoor MIMO wireless channel,''
			\emph{IEEE Trans. Antennas Propag.}, vol.~50, no.~5, pp.~591--599, May 2002.
			
			\bibitem{booktypical} 
			A.~W. Marshall, I.~Olkin, and B.~C. Arnold, \emph{Inequalities: Theory of
				Majorization and Its Applications}. Springer-Verlag: New York, 2011.
			
			\bibitem{jacobson2007expanded} 
			M.~W.~Jacobson and J.~A.~Fessler, ``An expanded theoretical treatment of
			iteration-dependent majorize-minimize algorithms,'' \emph{IEEE Trans. Image
				Process.}, vol.~16, no.~10, pp.~2411--2422, Oct. 2007.
			
		\end{thebibliography}
		
	\end{document}